\newtheorem{secthm}{Theorem}[section]
\newtheorem{secex}[secthm]{Example}
\newtheorem{secprop}[secthm]{Proposition}
\newtheorem{secdefn}[secthm]{Definition}
\newtheorem{secrem}[secthm]{Remark}
\newtheorem{secasm}[secthm]{Assumption}
\newtheorem{secsasm}[secthm]{Standing Assumption}
\tikzstyle{block} = [draw, rectangle, minimum height=2em, minimum
\tikzstyle{sum} = [draw, fill=blue!20, circle, node
\tikzstyle{input} = [coordinate] \tikzstyle{output} =
\tikzstyle{pinstyle} = [pin edge={to-,thin,black}]
\tikzstyle{every
node}=[font=\small] \tikzstyle{every path}=[line width=0.8pt,line
\newcommand{\bR} { {\mathbb R}}
\newcommand{\bZ} { {\mathbb Z}}
\newcommand{\cE} { {\mathcal E}}
\newcommand{\cH} { {\mathcal H}}
\newcommand{\cJ} { {\mathcal J}}
\newcommand{\cR} { {\mathcal R}}
\newcommand{\1}{\mbox{1}\hspace{-0.25em}\mbox{l}}
\def\red{\hfill $\lhd$}
\begin{document}
\title{Krasovskii Passivity for Sampled-data Stabilization and Output Consensus}
\author{Yu Kawano,  Alessio Moreschini, Michele Cucuzzella
\thanks{The work of Yu Kawano was partially supported by JSPS KAKENHI Grant Number JP21K14185. The work of Alessio Moreschini has been supported by the EPSRC grant “Model Reduction from Data”, Grant No. EP/W005557. The work of Michele Cucuzzella is part of the project NODES which has received funding from the MUR--M4C2 1.5 of PNRR with grant agreement no. ECS00000036.}
\thanks{Yu Kawano is with the Graduate School of Advanced Science and Engineering, Hiroshima University, Higashi-hiroshima 739-8527, Japan (e-mail: ykawano@hiroshima-u.ac.jp).}
\thanks{Alessio Moreschini is with the Department of Electrical and Electronic Engineering, Imperial College London, SW7 2AZ London, U.K. (e-mail: a.moreschini@imperial.ac.uk).}
\thanks{Michele Cucuzzella is with the Department of Electrical, Computer and Biomedical Engineering, University of Pavia, 27100 Pavia, Italy (e-mail: michele.cucuzzella@unipv.it)}
}

\maketitle

\begin{abstract}
In this paper, we establish the novel concept of Krasovskii passivity for sampled discrete-time nonlinear systems, enabling Krasovskii-passivity-based control design under sampling. We consider two separate control objectives: stabilization and output consensus, where the latter is studied under the presence  of an unknown constant disturbance. Inspired by methodologies in the continuous-time case, we develop sampled-data control schemes for each control objective based on Krasovskii passivity. The proposed sampled discrete-time controllers are respectively  validated through simulations on a DC microgrid of boost converters and a DC microgrid of buck converters whose continuous-time models and their implicit midpoint discretizations are Krasovskii passive in each time scale. 
\end{abstract}

\begin{IEEEkeywords}
Discrete-time systems, nonlinear systems, Krasovskii passivity, stabilization, output consensus
\end{IEEEkeywords}

\section{Introduction}
\IEEEPARstart{P}{assivity} provides an input-output approach to analysis and control of nonlinear dynamical systems. As leveraged by books~ \cite{Vidyasagar:81,Schaft:17,HC:11,SJK:12}, passivity is nowadays a mature theory, playing a central role in systems and control. Nevertheless, in several nonlinear applications, different notions of passivity are still being investigated by looking at inputs or outputs that are different from the standard ones, \emph{e.g.}, counterclockwise~\cite{Angeli:06}, differential passivity~\cite{Schaft:13}, shifted passivity~\cite{JOG:07,HAP:11,Simpson:18}, and incremental passivity~\cite{PM:08}. 

Along this line of research, the concept of Krasovskii passivity~\cite{KKS:21} (also called $\delta$-passivity~ \cite{SA:21}) has been introduced for continuous-time nonlinear systems, taking the time-derivative of the actual input to perform passivity analysis. Krasovskii passivity provides a systematic passivity framework for integral-type control design. As exemplified by the stabilization of DC-Zeta converters~\cite{KKS:21}, boost converters~\cite{CLK:19} or more general Brayton-Moser systems \cite{KCS:21}, the Krasovskii passivity framework significantly simplifies stabilizing control design. Also, Krasovskii passivity has been applied to output consensus control~\cite{KCF:22}, to acieve for example current sharing in DC microgrids under unknown power consumption. In practice, these control methodologies for continuous-time systems are implemented after temporal discretization of the considered plant. Hence, we need additional performance analysis for the sampled closed-loop system.

\smallskip

\subsubsection*{Literature review}
Temporal discretization of a passive system has been studied in the context of numerical integration. Notably, the standard discretization of continuous-time dynamics by means of zero-order hold and ideal sampler does not yield in general discrete-time dynamics satisfying the standard passivity inequality with respect to the same continuous-time output~\cite{Oishi:10,KHKSA:13,XAGZ:17,MMM:21,MMMN:22}. Accordingly, if a given passive system is expressed as a port-Hamiltonian system, the underlying geometric structure is generally destroyed~\cite{LA:06,MNM:22}. To overcome this issue, geometric and symplectic integration schemes~\cite{HLW:06} are employed with the aim of preserving the invariants of motion, and thus ensuring faithful performances of the long-time behavior while retaining passivity at the sampling instants. Also in~\cite{MMN:19,MNM:22,MQR:99}, the discrete gradient function has been employed to define port-Hamiltonian dynamics based on the exact flow of the system, while in~ \cite{LA:06,KL:20}, the implicit midpoint method or one-stage Gauss-Legendre collocation has been employed to preserve the symplectic geometric structures. However, none of the aforementioned papers has dealt with Krasovskii passivity, except for our preliminary work~\cite{KMC:22} for linear port-Hamiltonian systems (PHSs).

\smallskip

\subsubsection*{Contributions} To develop sampled-data control schemes based on Krasovskii passivity, we introduce the novel concept of Krasovskii passivity for sampled discrete-time nonlinear systems. Inspired by the continuous-time property, we define Krasovskii passivity by taking the temporal discretization of the time derivative of the actual input as the new input
involved in the passivity analysis. 
To get a better understanding of the proposed Krasovskii passivity concept, we also investigate its connection with relevant passivity concepts originally defined for continuous-time systems: incremental passivity and shifted  passivity~\cite{JOG:07,PM:08,Simpson:18}. In particular, we show the following implications: incremental passivity $\implies$ Krasovskii passivity $\implies$ shifted passivity with respect to a suitable output function. The latter can be expected by a similar implication in the continuous-time case~\cite{KKS:21}. However, the first comes from discrete-time nature. In this paper, we study nonlinear systems while linear PHSs are used for illustrating our results.  

Based on the proposed discrete-time Krasovskii passivity concept, we study two design problems: stabilization and output consensus. Understanding the continuous-time stabilizing controllers in terms of passivity, we first characterize structures of sampled stabilizing controllers and derive conditions for the closed-loop stability under detectability assumptions. Next, moving to systems under unknown constant disturbances, we present sampled-data control methodologies for output consensus by interpreting continuous-time output consensus controllers from the passivity perspective. 

The contribution of this paper is summarized as follows:
\begin{itemize}
\item We provide a novel notion of Krasovskii passivity for sampled discrete-time systems;

\item We establish the following implications for the relevant passivity concepts: incremental passivity $\implies$ Krasovskii passivity $\implies$ shifted passivity with respect to a suitable output function;

\item We develop sampled-data stabilizing control schemes for Krasovskii passive sampled discrete-time systems;

\item We develop sampled-data control methodologies for output consensus of Krasovskii passive sampled discrete-time systems under unknown constant disturbances.

\end{itemize}


\smallskip

\subsubsection*{Organization}
The remainder of this paper is organized as follows. In Section~\ref{sec:pre}, we provide a motivating example. The goal is to illustrate the usefulness of Krasovskii passivity and its potential for sampled-data control. Then, we summarize Krasovskii-passivity-based control techniques for continuous-time systems to make the paper self-contained. In Section~\ref{sec:KP}, we introduce the concept of Krasovskii passivity for sampled discrete-time systems. Based on this Krasovskii passivity, we present sampled-data control techniques for stabilization and output consensus in Sections~\ref{sec:stab} and~\ref{sec:oc}, respectively. In Section~\ref{sec:ex}, the proposed sampled-data control techniques are illustrated by achieving voltage regulation (\emph{i.e.}, stabilization) for a DC network of boost converters and current sharing (\emph{i.e.}, output consensus control) for a DC network of buck converters with nonlinear loads. Section~\ref{sec:con} provides concluding remarks. Linear PHSs are used to aid in understanding each main result. Proofs are shown in the Appendices.


\smallskip

\subsubsection*{Notation}
The sets of real numbers, non-negative real numbers, integers, and non-negative integers are denoted by $\bR$, $\bR_+$, $\bZ$, and $\bZ_+$, respectively. 
The $n$-dimensional vector whose all components are $1$ is denoted by $\1_n$. 
Let $\circ$ denote the Hadamard product, \emph{i.e.}, given $x, y \in \bR^n$, $( x \circ  y) \in \bR^n$ is a vector with elements $( x \circ  y)_i := x_i y_i$ for all $i=1,\dots,n$. 
For $x \in \bR^n$, the diagonal matrix with $i$th diagonal element $x_i$, $i=1,\dots,n$ is denoted by ${\rm diag}\{x\}$.
The $n \times n$ identity matrix is denoted by $I_n$.
For a full column rank real matrix $A$, its Moore-Penrose inverse is denoted by $A^+ := (A^\top A)^{-1} A^\top$.
For $P \in \bR^{n \times n}$, $P \succ 0$ (resp. $P \succeq 0$) means that $P$ is symmetric and positive (resp. semi) definite.
For $x \in \bR^n$, its  Euclidean norm weighted by $P  \succ 0$ is denoted by $| x |_P := \sqrt{x^\top P x }$.
If $P = I_n$, this is simply described by $| x |$.



\section{Motivation and Review on Krasovskii passivity for continuous-time systems}\label{sec:pre}
In this section we first provide a motivating example to show the usefulness of Krasovskii passivity for control design and its potential for sampled-data control. Then, we recall the definition of Krasovskii passivity for continuous-time systems and the main control techniques based on it.

\subsection{Motivation}
Consider the average governing dynamic equations of a boost converter~ \cite{SPO:97}, \emph{i.e.},
\begin{align}\label{mex:csys}
\left\{\begin{array}{r@{}l}
L_s \dot I_s &{}= - (1 - u) V + V^*_s\\[0.5ex]
C \dot V &{}= (1 - u )I_s - G^*_l V
\end{array}\right.
\end{align}
with the states $I_s(t)$, $V(t) \in \bR$ denoting the average current and voltage, respectively, and the input $u(t) \in [0,1]$ denoting the duty ratio, where $L_s$, $C$, $V^*_s$, and $G^*_l$ are positive constants. Note that this is a bilinear system.

First, we verify its standard passivity by selecting the total energy as a storage function, \emph{i.e.},
\begin{align}\label{mex:storage}
S(I_s, V) := \frac{1}{2} (L_s I_s^2 + C V^2).
\end{align}
Its time derivative along the system trajectory satisfies
\begin{align*}
\dot S(I_s, V) =  - G^*_l V^2 + V^*_s I_s \le V^*_s I_s,
\end{align*}
which implies passivity with respect to the current $I_s$ and the constant voltage source $V^*_s$. However, $V^*_s$ is not a control input. Thus, this passivity property is not directly helpful for control design. 

Now, we consider replacing $(I_s, V)$ in~\eqref{mex:storage} with its time derivative $(\dot I_s, \dot V)$. 
Then, we employ the following storage function:
\begin{align}\label{mex:KPstorage}
S_K (\dot I_s, \dot V) := \frac{1}{2} (L_s \dot I_s^2 + C \dot V^2),
\end{align}
which satisfies 
\begin{align*}
\dot S_K (\dot I_s, \dot V) \le  \dot u (\dot I_s V - I_s \dot V).
\end{align*}
Based on this property, the  integral-type controller 
\begin{align}\label{mex:stab_con}
\dot u = K (u - u^*) - (\dot I_s V - I_s \dot V)
\end{align}
with $K > 0$ stabilizes the unique equilibrium $((I_s^*, V^*), u^*)$ (see~ \cite{CLK:19} and~ \cite{KCS:21} for further details). This approach is later formalized by introducing the notion of Krasovskii passivity as a tool for passivity-based integral-type control design~ \cite{KKS:21}.

In practice, continuous-time controllers are implemented after discretized approximation. However, their performance analyses are often neglected. In this paper, we are interested in studying this problem from the Krasovskii passivity perspective. In fact, it is reasonable to expect that a suitable discretization method will preserve the Krasovskii passivity of the boost converter~\eqref{mex:csys}. To confirm this, we calculate the implicit midpoint discretization of~\eqref{mex:csys}, \emph{i.e.},
\begin{align}\label{mex:sys}
&\left\{\begin{array}{r@{}l}
L_s \Delta_\delta I_{s,k} &{}= - (1 - u_k) \sigma_\delta V_k + V^*_s\\[0.5ex]
C \Delta_\delta V_k &{}= (1 - u_k ) \sigma_\delta I_{s,k} - G^*_l \sigma_\delta V_k,
\end{array}\right.
\end{align}
with
\begin{equation*}
\Delta_\delta I_{s,k} := \frac{I_{s, k+1} - I_{s,k}}{\delta}, 
\quad
\sigma_\delta I_{s,k} := \frac{I_{s, k+1} + I_{s,k}}{2}
\end{equation*}
and $\Delta_\delta V_k$ and $\sigma_\delta V_k$ similarly defined, while $\delta > 0$ and the subscript $k \in \bZ_+$ denote the sampling time and the time instant, respectively.
Based on the storage function~\eqref{mex:KPstorage} for Krasovskii passivity, we consider the following storage function:
\begin{align*}
S_K ( \Delta_\delta I_{s,k}, \Delta_\delta V_k )
:= \frac{1}{2} ( L_s (\Delta_\delta I_{s,k})^2 + C (\Delta_\delta V_k)^2 ),
\end{align*}
which satisfies
\begin{align*}
\Delta_\delta S_K ( \Delta_\delta I_{s,k}, \Delta_\delta V_k ) 
\le \Delta_\delta u_k (\Delta_\delta I_{s,k} \sigma_\delta V_k - \sigma_\delta I_k \Delta_\delta V_{s,k} ).
\end{align*}
The inequality above can be understood as Krasovskii passivity for the sampled boost converter~\eqref{mex:sys}.
Therefore, one can expect that it can be stabilized by the implicit midpoint discretization of the continuous-time stabilizing controller~\eqref{mex:stab_con}, which is further investigated in Section~\ref{sec:ex_bst} for a network of boost converters. 

Motivated by this example, the objective of this paper is to develop Krasovskii passivity theory for sampled discrete-time systems. To make the paper self-contained, the rest of this section is dedicated to reviewing results on Krasovskii passivity in the continuous-time case~\cite{KKS:21,KCF:22}. We first show its definition and applications to integral-type control design for stabilization and output consensus.


\subsection{Definition}
Consider a continuous-time nonlinear system:
\begin{align}\label{ceq:sys}
\dot x = f (x, u),
\end{align}
with state $x(t) \in \bR^n$, input $u(t) \in \bR^m$, and class $C^1$ mapping $f :\bR^n \times \bR^m \to \bR^n$.
Krasovskii passivity is introduced under the existence of an equilibrium point.
\smallskip
\begin{secasm}\label{casm:eq}
For the system~\eqref{ceq:sys}, the set 
\begin{align*}
\cE :=\{(x^*, u^*) \in \bR^n \times \bR^m: f (x^*, u^*)=0\}
\end{align*}
is not empty. \red
\end{secasm}
\smallskip
Krasovskii passivity is defined as passivity of the so-called extended system~\cite{Schaft:82}:
\begin{align}\label{ceq:sys_ex}
\left\{\begin{array}{r@{}l}
\dot x &{}= f (x, u) \\[0.5ex]
\dot u &{}= v \\[0.5ex]
z &{}= \hat q (x, u) :=  q (x, u, \dot x),
\end{array}\right.
\end{align}
where $q :\bR^n \times \bR^m \times \bR^n \to \bR^m$ is continuous, and $\hat q(x,u)$ is defined by substituting $\dot x = f(x,u)$ into $q(x, u, \dot x)$.
This yields a system with the extended state $(x, u) \in \bR^n \times \bR^m$, input $v \in \bR^m$, and output $z \in \bR^m$.
Now, we are ready to state the definition of Krasovskii passivity.
\smallskip
\begin{secdefn}
Under Assumption~\ref{casm:eq}, the system~\eqref{ceq:sys} is said to be \emph{Krasovskii passive} at~$(x^*, u^*) \in \cE$ if for its extended system~\eqref{ceq:sys_ex}, there exist a class~$C^1$ function~$S_K:\bR^n  \times \bR^m \times \bR^n \to \bR_+$ and a continuous function~$W_K:\bR^n \times \bR^m \times \bR^n \to \bR_+$ such that $S_K(x^*, u^*, 0 )=0$, and $\hat S_K(x, u) := S_K(x, u, \dot x )$ satisfies
\begin{align*}
\frac{\partial \hat S_K(x, u)}{\partial x} f(x, u) + \frac{\partial \hat S_K(x, u)}{\partial u} v 
\le - W_K(x, u, \dot x ) + v^\top z
\end{align*}
for all~$(x, u) \in \bR^n \times \bR^m$ and~$v \in \bR^m$. Moreover, the system is said to be \emph{strictly Krasovskii passive} if there exists a class~$C^1$ function~$h:\bR^n \to \bR^m$ such that $z = \dot y$ for $y := h(x)$, and $W_K(x, u, \dot x) = 0$ implies $\dot x = 0$ for all $(x,u) \in \bR^n \times \bR^m$.
\red
\end{secdefn}\smallskip

Strict Krasovskii passivity is stronger than Krasovskii passivity in two aspects. First, the Krasovskii passive output is the time derivative of $y$. Second, $W_K$ is positive definite with respect to $\dot x$. In contrast, an equilibrium point is not necessarily to exist. Below, we respectively employ Krasovskii passivity and strict Krasovskii passivity for stabilizing control and output consensus control, where the existence of an equilibrium point is not required for output consensus.


\subsection{Stabilizing Control}
In this subsection, we recall results in~\cite{KKS:21} for Krasovskii-passivity-based stabilizing control, studied under the following detectability assumption for the extended system~\eqref{ceq:sys_ex}.
\smallskip
\begin{secdefn}
Under Assumption~\ref{casm:eq}, the extended system~\eqref{ceq:sys_ex} is said to be \emph{detectable} at~$(x^*, u^*) \in \cE$ if 
\begin{align*}
&v(\cdot) = 0 \mbox{\quad and\quad} z(\cdot )=0\\
&\quad \implies \quad \lim_{t \to \infty} (x(t), u(t)) = (x^*, u^*)
\end{align*}
holds.
\red
\end{secdefn}\smallskip

Krasovskii passivity can be understood as passivity of the extended system~\eqref{ceq:sys_ex}, and thus a stabilizing controller is designed via it. Namely, we consider the following integral-type controller:
\begin{align}\label{ceq:con_stab}
K_1 \dot u =  - K_2 (u - u^*) - z,
\end{align}
where~$K_1, K_2 \succ 0$ ($K_1, K_2 \in \bR^{m \times m}$) are free tuning parameters. 
Note that the controller has the following Krasovskii passivity property while this is not mentioned in~\cite{KKS:21}:
\begin{align*}
\frac{1}{2} \frac{d}{dt}  |u - u^*|_{K_2}^2 
=  - |\dot u |_{K_1}^2 - \dot u^\top z.
\end{align*}
Based on this property, we have the following proposition for closed-loop stability, which is a slight modification of the original result~\cite[Corollary 3.2]{KKS:21}.
\smallskip
\begin{secprop}\label{cprop:stab}
Consider the closed-loop system consisting of a Krasovskii passive system~\eqref{ceq:sys} at $(x^*, u^*) \in \cE$ and a controller~\eqref{ceq:con_stab}.
Then, the following two statements hold:
\begin{itemize}
\item[(a)] If the closed-loop system is positively invariant on a compact set $\Omega \subset \bR^n \times \bR^m$ containing $(x^*, u^*)$, then  any trajectory starting from~$\Omega$ converges to the largest invariant set contained in
\begin{align*}
&\{ (x, u) \in \Omega  : \hat z = 0, W_K = 0 \}\\
&\ \ \, \hat z := K_2 (u- u^*) + \hat q (x, u)
\end{align*}

\item[(b)] If the extended system~\eqref{ceq:sys_ex} with the output~$(\hat z, W_K)$ is detectable at~$(x^*,u^*)$, then~$(x^*, u^*)$ is asymptotically stable.
\red
\end{itemize}
\end{secprop}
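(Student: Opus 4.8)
The plan is to merge the Krasovskii storage function of the plant with the natural storage function of the controller into a single candidate Lyapunov function and then invoke LaSalle's invariance principle. Concretely, I would take
\[
U(x,u) := \hat S_K(x,u) + \tfrac12\,|u-u^*|_{K_2}^2 .
\]
Along the closed loop the controller acts as $\dot u = v$ with $v = K_1^{-1}\bigl(-K_2(u-u^*)-z\bigr)$, so the controller identity recalled just before the proposition, $\tfrac12\tfrac{d}{dt}|u-u^*|_{K_2}^2 = -|\dot u|_{K_1}^2 - \dot u^\top z$, holds verbatim, while substituting $v=\dot u$ into the Krasovskii passivity inequality gives $\dot{\hat S}_K \le -W_K(x,u,f(x,u)) + \dot u^\top z$. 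Adding the two relations, the cross terms $\pm\dot u^\top z$ cancel and one obtains
\[
\dot U \le -W_K(x,u,f(x,u)) - |\dot u|_{K_1}^2 \le 0
\]
along every closed-loop trajectory, since $K_1\succ0$ and $W_K\ge0$.

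For part (a): because $\Omega$ is compact and positively invariant and $U$ is $C^1$ with $\dot U\le0$ on $\Omega$, LaSalle's invariance principle applies, and every trajectory starting in $\Omega$ converges to the largest invariant subset of $\{(x,u)\in\Omega:\dot U=0\}$. On that set the inequality above forces both $W_K(x,u,f(x,u))=0$ and, using $K_1\succ0$, $\dot u=0$; feeding $\dot u=0$ back into the controller equation and using $z=\hat q(x,u)$ along the plant yields $0=-K_2(u-u^*)-\hat q(x,u)=-\hat z$. Hence the limiting invariant set lies in $\{(x,u)\in\Omega:\hat z=0,\ W_K=0\}$, which is the claim. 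Note that positive definiteness of $U$ is not needed here, which is why (a) is phrased through a compact positively invariant set rather than through sublevel sets.

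For part (b): first I would record that $(x^*,u^*)$ is a closed-loop equilibrium with $U(x^*,u^*)=0$. Indeed $\hat S_K\ge0$ and $\hat S_K(x^*,u^*)=S_K(x^*,u^*,0)=0$, so $(x^*,u^*)$ is a global minimizer of $\hat S_K$ and all its partial derivatives vanish there; plugging $f(x^*,u^*)=0$ into the Krasovskii inequality then gives $v^\top\hat q(x^*,u^*)\ge W_K(x^*,u^*,0)\ge0$ for all $v$, forcing $\hat q(x^*,u^*)=0$ (and $W_K(x^*,u^*,0)=0$), so the controller produces $\dot u=0$ at $(x^*,u^*)$. Now I would combine (a) with detectability: on the largest invariant set $M$ of part (a) every trajectory satisfies $v=\dot u\equiv0$, $\hat z\equiv0$ and $W_K\equiv0$, so detectability of the extended system with output $(\hat z,W_K)$ at $(x^*,u^*)$ forces such trajectories to converge to $(x^*,u^*)$; since $U$ is constant on $\omega$-limit sets and $U(x^*,u^*)=0$, this gives $U\equiv0$ on $M$, i.e.\ $u\equiv u^*$ and $\hat S_K\equiv0$, collapsing $M$ to $\{(x^*,u^*)\}$ and hence attractivity from $\Omega$; Lyapunov stability then follows from $U$ being a Lyapunov function for the closed loop.

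The main obstacle is precisely this last step: $U$ is only positive \emph{semi}definite in general, since $\hat S_K(\cdot,u^*)$ need not be positive definite in $x$. Deducing genuine Lyapunov stability (and ensuring $M$ is no larger than $\{(x^*,u^*)\}$) therefore requires exploiting the detectability hypothesis together with (local) positive definiteness of $U$ at $(x^*,u^*)$ — e.g.\ isolatedness of the equilibrium, as holds when $S_K$ is positive definite in $\dot x$ — and this is the point where the argument must be carried out with care; everything else is a routine composite-Lyapunov/LaSalle computation and a minor variant of \cite[Corollary 3.2]{KKS:21}.
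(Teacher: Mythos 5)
Your proposal is correct and follows essentially the same route as the paper: Proposition~\ref{cprop:stab} itself is only cited to \cite{KKS:21}, but the proof of its discrete-time counterpart (Theorem~\ref{thm:stab} in Appendix~\ref{app:stab}) uses exactly your composite storage function $S_K + S_u$, the cancellation of the supply terms $\pm \dot u^\top z$, and the invariance principle combined with detectability for part (b). The positive-semidefiniteness caveat you flag for Lyapunov stability in (b) is a genuine subtlety, but it is equally left implicit in the paper's own treatment.
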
\smallskip

For the controller~\eqref{ceq:con_stab}, we only specify $u^*$, and information of $x^*$ is not required. In general, $x^*$ depends on system parameters. Thus, this structure is helpful for robust control design as shown in~\cite{KKS:21} for the DC-Zeta converter, in~\cite{CLK:19} for the boost converter, and in~\cite{KCS:21} for general Brayton-Moser systems.


\subsection{Output Consensus Control}
In the context of network control, strict Krasovskii passivity has been applied to output consensus control under external unknown constant disturbance~\cite{KCF:22}, which is summarized in this subsection.

Consider the following system under unknown constant disturbance:
\begin{align}\label{ceq:sys_d}
\left\{\begin{array}{r@{}l}
\dot x &{}= g (x, u, d)\\[0.5ex]
y &{}=  h (x, d)
\end{array}\right.
\end{align}
with the output $y(t) \in \bR^m$ and unknown constant disturbance $d \in \bR^r$, where $g  :\bR^n \times \bR^m \times \bR^r \to \bR^n$ and $h  :\bR^n \times \bR^r \to \bR^m$ are of class $C^1$ such that $\partial g/\partial u$ is of full column rank at each $(x, u, d) \in \bR^n \times \bR^m \times \bR^r$. The rank property is imposed to guarantee that $\dot x=0$ (and $\ddot x = 0$) implies $\dot u = 0$.

For output consensus, we consider the following integral-type controller:
\begin{align}\label{ceq:con_oc}
\left\{\begin{array}{r@{}l}
\dot \xi &{}= E^\top y\\[0.5ex]
u &{}= -E \xi
\end{array}\right.
\end{align}
with the controller state $\xi(t) \in \bR^p$, where $E \in \bR^{m \times p}$ is such that ${\rm rank} \; E^\top = m-1$ and $E^\top \1_m = 0$. For instance, $E$ is the incidence matrix of an undirected connected communication graph. Note that the controller has the following Krasovskii passivity property while this is not mentioned in~\cite{KCF:22}:
\begin{align*}
\frac{1}{2} \frac{d}{dt} |E^\top y|^2 = y^\top E E^\top \dot y = \dot \xi^\top E^\top \dot y = -\dot u^\top \dot y.
\end{align*}

This controller achieves output consensus for strictly Krasovskii passive systems \cite[Theorem 3.2]{KCF:22}.
\smallskip
\begin{secprop}
Given $d \in \bR^r$, suppose that the closed-loop system consisting of a strictly Krasovskii passive system~\eqref{ceq:sys_d} and a controller~\eqref{ceq:con_oc} is positively invariant on a compact set $\Omega \subset \bR^n \times \bR^p$. Then, for each $(x(0), \xi(0)) \in \Omega$, there exists $\alpha: \bR \to \bR$ such that $\lim_{t \to \infty} ( y (t) - \alpha (t) \1_m) = 0$.~\red
\end{secprop}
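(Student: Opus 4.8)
The plan is to run a LaSalle-type argument on the closed-loop system, exactly mirroring the structure used for Proposition~\ref{cprop:stab}, but now with the consensus controller~\eqref{ceq:con_oc} in place of~\eqref{ceq:con_stab}. First I would form a composite storage function by adding the Krasovskii storage function $\hat S_K(x,u)$ of the plant~\eqref{ceq:sys_d} (where now the plant is driven by the disturbance $d$, held constant, so that $\hat S_K$ is evaluated along $\dot x = g(x,u,d)$) to the controller's storage $\tfrac12|E^\top y|^2$. Differentiating along the closed loop and using the strict Krasovskii passivity inequality $\dot{\hat S}_K \le -W_K(x,u,\dot x) + v^\top z$ together with the controller identity $\tfrac{d}{dt}\tfrac12|E^\top y|^2 = -\dot u^\top \dot y$, the cross terms $v^\top z = \dot u^\top z = \dot u^\top \dot y$ (using $z = \dot y$ from strictness) cancel against $-\dot u^\top \dot y$, leaving $\dot V \le -W_K \le 0$. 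Hence $V$ is nonincreasing and bounded below by $0$.

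Next I would invoke positive invariance of $\Omega$: since the closed-loop trajectory from $(x(0),\xi(0))\in\Omega$ stays in the compact set $\Omega$, LaSalle's invariance principle applies and the trajectory converges to the largest invariant set $M$ contained in $\{(x,\xi)\in\Omega : W_K(x,u,\dot x)=0\}$. By the definition of strict Krasovskii passivity, $W_K = 0$ forces $\dot x = 0$ along $M$. On an invariant set with $\dot x \equiv 0$ the state $x$ is constant; then $\ddot x = 0$, and the full-column-rank hypothesis on $\partial g/\partial u$ forces $\dot u = 0$, so $u$ is constant on $M$ as well. Since $u = -E\xi$ and $\dot u = -E\dot\xi = -EE^\top y = 0$ with $E$ having the property that $E^\top v = 0 \iff v \in \mathrm{span}\,\1_m$, we conclude $y = \alpha\,\1_m$ for some scalar on $M$; and because $x$ is constant and $h$ is fixed (with $d$ constant), this scalar is a constant on each component of $M$. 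Therefore, along the trajectory, $y(t) - \alpha(t)\1_m \to 0$ for a suitable $\alpha:\bR\to\bR$ (one can take $\alpha(t)$ to be, e.g., the average $\tfrac1m\1_m^\top y(t)$, which by the above converges to the limiting constant while $y(t)$ approaches $\alpha(t)\1_m$).

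I would be careful about one subtlety: strict Krasovskii passivity as stated does not presuppose an equilibrium point, and the inequality is postulated for the undisturbed system~\eqref{ceq:sys}; here I must apply it to~\eqref{ceq:sys_d} viewing $d$ as a frozen parameter, so I should note that strict Krasovskii passivity is assumed for the map $x \mapsto g(x,u,d)$ for the given $d$ (this is precisely how~\cite{KCF:22} sets it up). A second point is the passage from ``$\dot u = 0$ on $M$'' to ``$y \in \mathrm{span}\,\1_m$ on $M$'': this uses only linear algebra of $E$, namely $\ker E^\top \cap \mathrm{(domain)} $ versus $\mathrm{span}\,\1_m$, combined with $\mathrm{rank}\,E^\top = m-1$ and $E^\top\1_m=0$, which together give $\ker E^\top = \mathrm{span}\,\1_m$.

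The main obstacle, and the step I expect to require the most care, is the chain $W_K = 0 \Rightarrow \dot x = 0 \Rightarrow \ddot x = 0 \Rightarrow \dot u = 0$ on the invariant set: the first implication is definitional, but the second requires genuinely using invariance (so that $\dot x \equiv 0$, not merely $\dot x(t_0)=0$ at one instant) and then differentiating the plant equation $\dot x = g(x,u,d)$ in time to get $0 = \partial g/\partial x \cdot \dot x + \partial g/\partial u \cdot \dot u = \partial g/\partial u \cdot \dot u$, whence full column rank of $\partial g/\partial u$ kills $\dot u$. Everything else — the storage-function bookkeeping and the LaSalle conclusion — is routine once the cross-term cancellation is written out.
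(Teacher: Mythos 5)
Your proof is correct, and it is essentially the same Lyapunov/LaSalle argument the paper uses: the paper itself defers this continuous-time proposition to \cite[Theorem 3.2]{KCF:22}, but its proof of the discrete-time counterpart (Theorem~\ref{thm:oc} in Appendix~\ref{app:oc}) follows exactly your chain $\dot V \le -W_K \Rightarrow W_K \to 0 \Rightarrow \dot x \to 0 \Rightarrow \dot u \to 0 \Rightarrow E^\top y \to 0 \Rightarrow y \in \ker E^\top = \mathrm{span}\,\1_m$. You also correctly flag the two delicate points (applying strict Krasovskii passivity to~\eqref{ceq:sys_d} with $d$ frozen, and using invariance plus the full-column-rank condition on $\partial g/\partial u$ to pass from $\dot x \equiv 0$ to $\dot u \equiv 0$), so nothing is missing.
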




\section{Krasovskii Passivity for Sampled Discrete-time Systems}\label{sec:KP}
This paper explores Krasovskii-passivity-based control schemes for sampling data. We first investigate the notion of Krasovskii passivity for discrete-time systems arising from temporal discretization. To better understand this passivity notion, we next study its connection with other two passivity notions: incremental passivity and shifted passivity. In the following two sections, we study Krasovskii-passivity-based control design for stabilization and output consensus. In each section, we employ linear port-Hamiltonian systems (PHSs) to illustrate the proposed concept and control design.


\begin{figure}[t]
\begin{center}
\begin{tikzpicture}[scale=1, transform shape]

\node [text width=7cm, text height=1.5cm, fill=white, dashed, draw=gray, rectangle, rounded corners, text centered] at (3.15,0.2){ };
\draw (0.75,1.3) node
{Sampled model};

\bXInput{A}

\bXStyleBloc{rounded corners,fill=blue!20,text=blue}
\bXBloc[0]{zoh}{Holder}{A}

\bXStyleBloc{rounded corners,fill=black!10,text=blue}
\bXBloc[4]{plant}{Plant}{zoh}
\bXLink[$u(t)$]{zoh}{plant}

\bXStyleBloc{rounded corners,fill=blue!20,text=blue}
\bXBloc[4]{sampler}{Sampler}{plant}
\bXLink[$x(t)$]{plant}{sampler}

\bXOutput[2]{Z}{sampler}
\bXBranchy[5]{Z}{shift}
\bXLineStyle{-}
\bXLinkxy{sampler}{shift}

\bXStyleBloc{rounded corners,fill=blue!20,text=blue}
\bXBloc[-13.45]{controller}{Controller}{shift}
\bXOutput[-13.2]{shiftC}{controller}
\bXDefaultLineStyle
\bXLinkyx{Z}{controller}
\bXLink[$x(k\delta)$]{shift}{controller}
\bXLineStyle{-}
\bXLink[$u(k\delta)$]{controller}{shiftC}
\bXDefaultLineStyle
\bXLinkyx{shiftC}{zoh}

%
%
\end{tikzpicture}
\caption{Sampled-data scheme.}
\label{fig:scheme}
\end{center}
\end{figure}
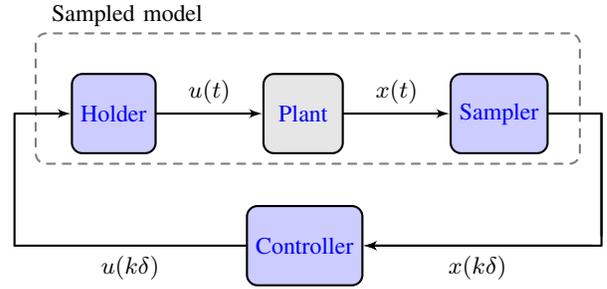

\subsection{Sampling and numerical integration}
A typical approach to computer control is based on the sample-and-hold technique~ \cite{CF:12,Sontag:13}, as in Fig.~\ref{fig:scheme}. The value of the state $x(t)$ is measured only at sampling times (discrete instants) $k\delta$ for each $\delta > 0$ and every $k\in \bZ$. The input $u(t)$ is generated by the zero-order hold that is used to take discrete signals at time $k\delta$ and to hold the value constant in the time interval $[k\delta, k\delta+ \delta )$, which is fed into the plant as a control. Formally, we consider locally integrable $u:[0,\infty)\to \mathbb{R}^m$. Then, for a sampling period $\delta >0$ and any $k\in \bZ$, we have $u(t)=u(k\delta)=:u_k$ for all $t\in [k\delta, k\delta+\delta )$. Then, applying temporal discretization, \emph{e.g.}~\cite{HLW:06}, to the continuous-time system~\eqref{ceq:sys} yields the following sampled discrete-time system:
\begin{align}\label{eq:sys}
\Delta_\delta x_k := \frac{x_{k+1} -x_k}{\delta} = f_\delta (\sigma_\delta x_k, u_k),
\end{align}
where $f_\delta : \bR^n \times \bR^m \to \bR^n$ is continuous at each $\delta$, and $\sigma_\delta$ denotes a time-shift operator mapping $x_k$ to some continuous function of $(x_k, x_{k+1})$ at each $\delta$; also see Example~\ref{ex:discretization} below.
\smallskip
\begin{secrem}
In general, temporal discretization of the continuous-time system~\eqref{ceq:sys} may depend also on $x_k$. Moreover, $\Delta_\delta x_k$ and $\sigma_\delta x_k$ may depend on further future states $x_{k+2}$, $x_{k+3}$, and so forth. Our results can be extended to such cases. However, to deliver the essence of our approach while keeping notation concise, we focus on a sampled discrete-time system in the form of~\eqref{eq:sys}.
\red
\end{secrem}

\smallskip

Throughout this paper, we suppose that~\eqref{eq:sys} is well defined.

\begin{secsasm}\label{sasm}
For each sequence $\{u_k\}_{k \in \bZ} \in (\bR^m)^{\bZ}$, a sequence $\{x_k\}_{k \in \bZ} \in (\bR^n)^{\bZ}$ satisfying~\eqref{eq:sys} exists and is unique.
\red
\end{secsasm}\smallskip

For instance, Standing Assumption~\ref{sasm} holds if a system~\eqref{eq:sys} admits an explicit form:
\begin{align}\label{eq:sys_exp}
x_{k+1} = F_\delta (x_k, u_k),
\end{align}
where $F_\delta:\bR^n \times \bR^m \to \bR^n$ is continuous at each $\delta$.
This is a system such that for each sequence $\{u_k\}_{k \in \bZ} \in (\bR^m)^{\bZ}$, a sequence $\{x_k\}_{k \in \bZ} \in (\bR^n)^{\bZ}$ generated by~\eqref{eq:sys_exp} is a unique solution to~\eqref{eq:sys}.

In this paper, we apply time-shift operators to functions also. For a function $a(\{x_k, u_k\}_{k \in \bZ})$, they are respectively defined as
\begin{align*}
\Delta_\delta a(\{x_k, u_k\}_{k \in \bZ}) &: = (\Delta_\delta z_k)|_{z_k =a(\{x_k, u_k\}_{k \in \bZ})}, \\[1ex]
\sigma_\delta a(\{x_k, u_k\}_{k \in \bZ}) &: =  (\sigma_\delta z_k)|_{z_k =a(\{x_k, u_k\}_{k \in \bZ})}.
\end{align*}
For the sake of notional simplicity, the self-composition of $\Delta_\delta$ is denoted by $\Delta_\delta^1:= \Delta_\delta$ and {$\Delta_\delta^{i+1} := \Delta_\delta \bar \circ \Delta_\delta^i$}, $i=1,2,\dots$, {where $\bar \circ$ denotes the composition operator.} The operator  $\sigma_\delta^i$, $i=1,2,\dots$ is defined similarly.
Next, to be consistent with continuous-time dynamics, we assume for any constant $c$ that $\sigma_\delta$ and $c_k := c$, $k \in \bZ$ satisfy
\begin{align}\label{eq:const}
\sigma_\delta  c_k  = c, \quad \forall k \in \bZ.
\end{align}

The class of sampled discrete-time systems~\eqref{eq:sys} contains one obtained by the forward Euler and implicit midpoint methods, which is explained in the following example.
\smallskip
\begin{secex}\label{ex:discretization}
For the forward Euler method, we have
\begin{align*}
f_\delta (\sigma_\delta x_k, u_k) &= f (\sigma_\delta x_k, u_k),\\
\sigma_\delta x_k & = x_k.
\end{align*}
For the implicit midpoint method, we have
\begin{subequations}\label{eq:IMM}
\begin{align}
f_\delta (\sigma_\delta x_k, u_k) &= f (\sigma_\delta x_k, u_k),
\label{eq1:IMM}\\
\sigma_\delta x_k & = \frac{x_k + x_{k+1}}{2}.
\label{eq2:IMM}
\end{align}
\end{subequations}
Both methods satisfy~\eqref{eq:const}.

Next, we apply $\Delta_\delta$ to the quadratic function $|x_k|^2/2$. It follows that
\begin{align*}
 {\Delta_\delta \left( \frac{|x_k|^2}{2} \right)}
&= \frac{|x_{k+1}|^2 - |x_k|^2}{2\delta}\\
&= \frac{(x_{k+1} -x_k)^\top}{\delta} \frac{(x_k + x_{k+1})}{2}\\
&= (\Delta_\delta x_k)^\top  \sigma_\delta x_k.
\end{align*}
Similar calculations are later used for the analysis of linear PHSs.~\red
\end{secex}


\subsection{Definition}

In this paper, we investigate Krasovskii passivity for a sampled discrete-time system~\eqref{eq:sys} by assuming that it has an equilibrium point in the discrete-time sense.
\smallskip
\begin{secasm}\label{asm:KP}
For the system~\eqref{eq:sys}, the set
\begin{align*}
\cE :=\{(x^*, u^*) \in \bR^n \times \bR^m:  x^* = f_\delta (\sigma_\delta x^*, u^*)\}
\end{align*}
is not empty. 
\red
\end{secasm}\smallskip

In the continuous-time case, Krasovskii passivity is defined as passivity of the extended system~\eqref{ceq:sys_ex} by viewing $\dot u$ as the new input. Mimicking this procedure, we introduce an extended system of the discrete-time system~\eqref{eq:sys} as follows:
\begin{align}\label{eq:sys_ex}
\left\{\begin{array}{r@{}l}
\Delta_\delta x_k &{}= f_\delta (\sigma_\delta x_k, u_k)\\[0.5ex]
\Delta_\delta u_k &{}= v_k\\[0.5ex]
z_k &{}=  q_\delta ( \sigma_\delta x_k, u_k, \Delta_\delta \sigma_\delta x_k),
\end{array}\right.
\end{align}
where $q_\delta :\bR^n \times \bR^m \times \bR^n \to \bR^m$ is a continuous function at each $\delta$.
This is a system with the extended state $(x_k,u_k)$, input $v_k \in \bR^m$, and output $z_k \in \bR^m$.
Under Standing Assumption~\ref{sasm}, this extended system is well defined also.

Now, we are ready to introduce the concept of Krasovskii passivity for sampled discrete-time systems. 
\smallskip
\begin{secdefn}\label{def:KP}
Under Assumption~\ref{asm:KP}, the discrete-time system~\eqref{eq:sys} is said to be \emph{Krasovskii passive} at~$(x^*, u^*) \in \cE$ if for its extended system~\eqref{eq:sys_ex}, there exist continuous functions~$S_K, W_k: \bR^n \times \bR^m \times \bR^n \to \bR_+$ such that~$S_K( x^*, u^*, 0)=0$ and
\begin{subequations}\label{eq:KP}
\begin{align}\label{eq1:KP}
&\Delta_\delta S_K( x_k, u_k, \Delta_\delta x_k) \nonumber\\
&\le - W_K (\sigma_\delta x_k, u_k, \Delta_\delta \sigma_\delta x_k ) + v_k^\top z_k
\end{align}
for all~$(x_k, u_k) \in \bR^n \times \bR^m$ and~$v_k \in \bR^m$.
Moreover, the system is said to be \emph{strictly Krasovskii passive} if there exists a continuous function $h_\delta : \bR^n \to \bR^m$ such that $z_k = \Delta_\delta y_k$ for $y_k := h_\delta (\sigma_\delta x_k)$, and
\begin{align}\label{eq2:KP}
W_K( \sigma_\delta x_k, u_k, \Delta_\delta \sigma_\delta x_k )=0 \quad \implies \quad \Delta_\delta \sigma_\delta x_k =0
\end{align}
\end{subequations}
for all~$(x_k, u_k) \in \bR^n \times \bR^m$. 
The function $S_K$ is referred to as a \emph{storage function}.
\red
\end{secdefn}\smallskip

Similarly to the continuous-time case, strict Krasovskii passivity is stronger than Krasovskii passivity in two aspects. First, the Krasovskii passive output is the discretization of the time derivative of $y_k$. Second, $W_K$ is positive definite with respect to $\Delta_\delta \sigma_\delta x_k$. In contrast, an equilibrium point is not necessarily to exist. As for the continuous-time case, we develop sampled stabilizing controllers based on Krasovskii passivity and sampled output consensus controllers based on strict Krasovskii passivity, where the latter control design does not require the existence of an equilibrium point.

\smallskip
\begin{secrem}\label{rem2:KP}
In sampled discretization considered in this paper, $u (\delta k)$ is replaced with $u_k$ by viewing $u(t)$ as the control input. Accordingly, $\lim_{\tau \to 0^+} (u (\delta k + \tau) -u (\delta k))/\tau$ is replaced with $\Delta_\delta u_k$. In contrast, a function of the state $x(\delta k)$ (resp. $\lim_{\tau \to 0^+} (x (\delta k + \tau) - x (\delta k))/\tau$) is replaced with some function of $\sigma_\delta x_k$ (resp. $\Delta_\delta \sigma_\delta x_k$). In the continuous-time case, the extended system~\eqref{ceq:sys_ex} can be viewed as a system with the states $x$ and $u$. Thus, one may consider another sampled discretization by replacing a function of $(x, u)$ with some function of $(\sigma_\delta x_k, \sigma_\delta u_k)$. Our results can be extended to such a case; see the preliminary version for linear PHSs~\cite[Section 3.3]{KMC:22}.~\red
\end{secrem}


\subsection{Illustration by Linear Port-Hamiltonian Systems}\label{ex1:LPH}
We illustrate Definition~\ref{def:KP} for Krasovskii passivity of a sampled discrete-time system through analysis of a linear PHS:
\begin{align}\label{clPHeq:sys}
\left\{\begin{array}{r@{}l}
\dot x &{}=   (J - R) H x  + B u \\[0.5ex]
y &{}=  B^\top H x,
\end{array}\right.
\end{align}
where $H , R \succeq 0$ ($H , R\in \bR^{n \times n}$) and $J = -J^\top \in \bR^{n \times n}$.
The implicit midpoint method~\eqref{eq:IMM} provides the following discrete-time system:
\begin{align}\label{lPHeq:sys}
\left\{\begin{array}{r@{}l}
\Delta_\delta x_k &{}= (J - R) H \sigma_\delta x_k + B u_k \\[0.5ex]
y_k &{}=  B^\top H \sigma_\delta x_k.
\end{array}\right.
\end{align}
If the sampling period $\delta > 0$ is selected such that $I_n/\delta - (J - R) H/2$ is invertible, Standing Assumption~\ref{sasm} holds.

The obtained discrete-time PHS is Krasovskii passive.
\smallskip
\begin{secthm}\label{lPHthm:KP}
A discrete-time PHS~\eqref{lPHeq:sys} is Krasovskii passive with respect to the following storage function:
\begin{align}\label{IPHeq:storage}
S_K (  \Delta_\delta x_k) : =  \frac{1}{2} |\Delta_\delta x_k|_H^2.
\end{align}
Moreover, this is strictly Krasovskii passive if $H, R \succ 0$.
\end{secthm}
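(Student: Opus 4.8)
The plan is to verify Definition~\ref{def:KP} directly, using the quadratic-function identity established in Example~\ref{ex:discretization}. First I would compute the extended system~\eqref{eq:sys_ex} for the discrete PHS~\eqref{lPHeq:sys}: append $\Delta_\delta u_k = v_k$ and, guided by the continuous-time Krasovskii output $z = \dot y = B^\top H \dot x$, take the candidate output $z_k := B^\top H \Delta_\delta \sigma_\delta x_k$, i.e.\ $q_\delta(\sigma_\delta x_k, u_k, \Delta_\delta\sigma_\delta x_k) = B^\top H \Delta_\delta\sigma_\delta x_k$. With $S_K(\Delta_\delta x_k) = \frac12 |\Delta_\delta x_k|_H^2$, the calculation from Example~\ref{ex:discretization} (applied with the weight $H$ and with $\Delta_\delta x_k$ in place of $x_k$) gives
\begin{align*}
\Delta_\delta S_K = (\Delta_\delta^2 x_k)^\top H\, \sigma_\delta \Delta_\delta x_k = (\Delta_\delta^2 x_k)^\top H\, \Delta_\delta \sigma_\delta x_k,
\end{align*}
using that $\sigma_\delta$ and $\Delta_\delta$ commute (both are fixed linear combinations of shifts).

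Next I would differentiate the first line of~\eqref{lPHeq:sys} with the operator $\Delta_\delta$: since $\Delta_\delta$ is linear and commutes with $\sigma_\delta$, and $B$, $H$, $J$, $R$ are constant, one gets $\Delta_\delta^2 x_k = (J-R)H\, \Delta_\delta\sigma_\delta x_k + B\, \Delta_\delta u_k = (J-R)H\, \Delta_\delta\sigma_\delta x_k + B v_k$. Substituting this into the expression for $\Delta_\delta S_K$ and writing $w_k := \Delta_\delta\sigma_\delta x_k$ for brevity yields
\begin{align*}
\Delta_\delta S_K = w_k^\top H (J-R) H w_k + v_k^\top B^\top H w_k = - w_k^\top H R H w_k + v_k^\top z_k,
\end{align*}
where the $J$-term vanishes because $H J H$ is skew-symmetric ($J = -J^\top$, $H = H^\top$) and $R\succeq0$, $H\succeq0$ make $HRH\succeq0$. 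Hence~\eqref{eq1:KP} holds with $W_K(\sigma_\delta x_k, u_k, \Delta_\delta\sigma_\delta x_k) := |H\,\Delta_\delta\sigma_\delta x_k|_{?}$ — more precisely $W_K := w_k^\top H R H w_k \ge 0$ — and clearly $S_K(0) = 0$, so $S_K(x^*,u^*,0)=0$; this proves Krasovskii passivity. (One should note $S_K$ only depends on $\Delta_\delta x_k$, which is allowed.)

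For the strict case with $H, R \succ 0$, I would observe that the required output already has the form $z_k = \Delta_\delta y_k$ with $y_k = h_\delta(\sigma_\delta x_k) := B^\top H \sigma_\delta x_k$, matching~\eqref{lPHeq:sys}. It remains to check the implication~\eqref{eq2:KP}: if $W_K = w_k^\top HRH w_k = 0$ then, since $R\succ0$ and $H\succ0$, the matrix $HRH\succ0$, so $w_k = \Delta_\delta\sigma_\delta x_k = 0$. That is exactly~\eqref{eq2:KP}, giving strict Krasovskii passivity.

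The computation is essentially routine; the only genuine subtlety — and the step I would be most careful about — is the bookkeeping of the discrete operators: justifying that $\Delta_\delta$ and $\sigma_\delta$ commute with each other and with multiplication by the constant matrices, that applying $\Delta_\delta$ to the dynamics~\eqref{lPHeq:sys} legitimately produces $\Delta_\delta^2 x_k = (J-R)H\Delta_\delta\sigma_\delta x_k + Bv_k$ (so that the output $z_k$ depends on the claimed arguments $(\sigma_\delta x_k, u_k, \Delta_\delta\sigma_\delta x_k)$ and nothing further), and that the "$\Delta_\delta$ of a quadratic form" identity from Example~\ref{ex:discretization} extends verbatim to the $H$-weighted quadratic form $\frac12|\cdot|_H^2$ of the vector $\Delta_\delta x_k$. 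Once these operator identities are in place, the skew-symmetry cancellation of the $J$ term and the sign of the $R$ term deliver the inequality immediately.
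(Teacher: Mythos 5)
Your proposal is correct and follows essentially the same route as the paper's proof: apply $\Delta_\delta$ to the dynamics to obtain $\Delta_\delta^2 x_k = (J-R)H\Delta_\delta\sigma_\delta x_k + Bv_k$ with output $z_k = B^\top H\Delta_\delta\sigma_\delta x_k$, use the weighted version of the quadratic identity from Example~\ref{ex:discretization} to get $\Delta_\delta S_K = (\Delta_\delta\sigma_\delta x_k)^\top H\Delta_\delta^2 x_k$, and substitute to obtain $-|H\Delta_\delta\sigma_\delta x_k|_R^2 + v_k^\top z_k$, with the strict case following from $H,R\succ 0$ exactly as you describe. The paper's proof takes $W_K = |H\Delta_\delta\sigma_\delta x_k|_R^2$, which is your $w_k^\top HRH w_k$, so the two arguments coincide.
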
\smallskip

\begin{proof}
As a preliminary step, we apply $\Delta_\delta$ to the system~\eqref{lPHeq:sys}, yielding
\begin{align}\label{lPHeq:sys_ex}
\left\{\begin{array}{r@{}l}
\Delta_\delta^2 x_k = {}& (J - R) H \Delta_\delta \sigma_\delta x_k + B \Delta_\delta u_k \\[0.5ex]
\Delta_\delta u_k = {}& v_k\\[0.5ex]
z_k := {}&\Delta_\delta y_k =  B^\top H \Delta_\delta \sigma_\delta x_k.
\end{array}\right.
\end{align}
We show Krasovskii passivity with respect to the following storage function:
\begin{align*}
S_K (  \Delta_\delta x_k) : =  \frac{1}{2} |\Delta_\delta x_k|_H^2,
\end{align*}
which is zero if $\Delta_\delta x_k = 0$. 
It follows from~\eqref{lPHeq:sys_ex} that
\begin{align}\label{lPHeq:KP}
\Delta_\delta S_K ( \Delta_\delta x_k) 
&= \frac{|\Delta_\delta x_{k+1}|_H^2 - |\Delta_\delta x_k|_H^2}{2 \delta}  \nonumber\\
& = \frac{(\Delta_\delta x_{k+1} + \Delta_\delta x_k)^\top }{2} H \frac{\Delta_\delta x_{k+1} - \Delta_\delta x_k}{\delta}  \nonumber\\
& = (\Delta_\delta \sigma_\delta x_k)^\top H \Delta_\delta^2 x_k  \nonumber\\
& = -|H \Delta_\delta \sigma_\delta x_k|_R^2 + (\Delta_\delta \sigma_\delta x_k)^\top  H B v_k \nonumber\\
& = -|H \Delta_\delta \sigma_\delta x_k|_R^2 +  v_k^\top z_k.
\end{align}
Since $R \succeq 0$, the system is Krasovskii passive.
Finally, we show strict Krasovskii passivity. From~\eqref{lPHeq:sys_ex}, $z_k = \Delta_\delta y_k$. Next,~\eqref{eq2:KP} holds for $W_K(\Delta_\delta \sigma_\delta x_k) := |H \Delta_\delta \sigma_\delta x_k|_R^2$ if $H, R \succ 0$.
\end{proof}


\subsection{Relations with Incremental Passivity and Shifted Passivity}
At the end of this section, we investigate the connections of Krasovskii passivity with incremental passivity and shifted passivity. The objective is to show the implications in Fig.~\ref{fig:KPrelation}.


\begin{figure}[t!]
    \centering
\includegraphics[width=\linewidth]{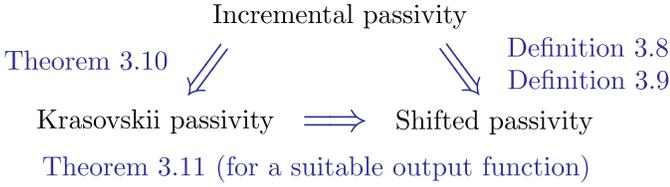}
    \caption{Relations among Krasovskii Passivity, Incremental Passivity, and Shifted Passivity.}
    \label{fig:KPrelation}
\end{figure}


Throughout this subsection, we assume that a system~\eqref{eq:sys} admits an explicit form~\eqref{eq:sys_exp} with output $y_k = H_\delta (x_k, u_k)$:
\begin{align}\label{eq:sys_exp_out}
\left\{\begin{array}{r@{}l}
x_{k+1} &{}= F_\delta (x_k, u_k)\\[0.5ex]
y_k &{}= H_\delta (x_k, u_k)
\end{array}\right.
\end{align}
where $H_\delta: \bR^n \times \bR^m \to \bR^m$ is continuous at each $\delta$. From~\eqref{eq:sys_ex}, its extended system is
\begin{align}\label{eq:sys_ex_exp}
\left\{\begin{array}{r@{}l}
x_{k+1} &{}= F_\delta (x_k, u_k)\\[0.5ex]
u_{k+1} &{}= u_k + \delta v_k\\[0.5ex]
z_k &{} :=  \Delta_\delta y_k  = \Delta_\delta H_\delta (x_k, u_k).
\end{array}\right.
\end{align}
For the sake of notional simplicity, we rewrite the storage function for Krasovskii passivity as
\begin{align}\label{eq:KP_storage_exp}
\hat S_K(x_k, u_k) := S_K( x_k, u_k, \Delta_\delta x_k).
\end{align}

Incremental passivity and shifted passivity are originally defined for continuous-time systems, \emph{e.g.},~\cite{PM:08,Schaft:13,Schaft:17}. These concepts can readily be transferred to sampled discrete-time systems. First, we  define incremental passivity by introducing an auxiliary system of~\eqref{eq:sys_exp_out} and its copy: 
\begin{align}\label{eq:sys_ax}
\left\{\begin{array}{r@{}l}
x_{k+1} &{}= F_\delta (x_k, u_k)\\[0.5ex]
x'_{k+1} &{}= F_\delta (x'_k, u'_k)\\[0.5ex]
y_k &{}= H_\delta (x_k, u_k)\\[0.5ex]
y'_k &{}= H_\delta (x'_k, u'_k),
\end{array}\right.
\end{align}

We define incremental passivity for discrete-time systems as follows.
\smallskip
\begin{secdefn}
The discrete-time system~\eqref{eq:sys_exp_out} is said to be \emph{incrementally passive} if for its auxiliary system~\eqref{eq:sys_ax}, there exists a continuous function~$S_I: \bR^n \times \bR^n \to \bR_+$ such that~$S_I(x, x)=0$ for all $x \in \bR^n$, and
\begin{align}\label{eq:IP}
\Delta_\delta S_I( x_k, x'_k) \le (u_k - u'_k)^\top (y_k - y'_k),
\end{align}
for all~$((x_k, u_k), (x'_k, u'_k)) \in (\bR^n \times \bR^m) \times (\bR^n \times \bR^m)$. 
\red
\end{secdefn}\smallskip

Next, shifted passivity is defined by fixing a trajectory $(x'_k, u'_k)$ of incremental passivity into an equilibrium point $(x^*, u^*) \in \cE$ as follows.
\smallskip
\begin{secdefn}
Under Assumption~\ref{asm:KP}, the discrete-time system~\eqref{eq:sys_exp_out} is said to be \emph{shifted passive} at $(x^*, u^*) \in \cE$ if there exists a continuous function~$S_S: \bR^n \to \bR_+$ such that~$S_S(x^*)=0$ and
\begin{align*}
\Delta_\delta S_S(x_k) \le (u_k - u^*)^\top (y_k - y^*),
\end{align*}
for all~$(x_k, u_k) \in \bR^n \times \bR^m$, where $y^* := H_\delta (x^*, u^*)$. 
\red
\end{secdefn}\smallskip

From their definitions, it is clear that incremental passivity implies shifted passivity. 
We further connect them with Krasovskii passivity as in Fig.~\ref{fig:KPrelation}.
First, we show that incremental passivity implies Krasovskii passivity. 
\smallskip
\begin{secthm}\label{thm:IP}
Under Assumption~\ref{asm:KP}, an incrementally passive discrete-time system~\eqref{eq:sys_exp_out} is Krasovskii passive at each $(x^*, u^*) \in \cE$.
\end{secthm}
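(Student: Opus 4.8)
The plan is to deduce the Krasovskii passivity inequality from the incremental passivity inequality by specializing the auxiliary system's second copy. The key observation is that the extended system~\eqref{eq:sys_ex_exp} records a trajectory $(x_k, u_k)$ together with its one-step-ahead version, and this pairing can be fed into the incremental inequality~\eqref{eq:IP}. Concretely, given a solution $\{(x_k, u_k)\}$ of~\eqref{eq:sys_exp_out}, I would apply~\eqref{eq:IP} to the pair consisting of $(x_k, u_k)$ and its shift $(x'_k, u'_k) := (x_{k+1}, u_{k+1})$. Since $\{x_{k+1}\}$ is itself a solution of $x_{k+1} = F_\delta(x_k, u_k)$ driven by the input sequence $\{u_{k+1}\}$, this is a legitimate instantiation of the auxiliary system~\eqref{eq:sys_ax}, so~\eqref{eq:IP} gives
\[
\Delta_\delta S_I(x_k, x_{k+1}) \le (u_k - u_{k+1})^\top (y_k - y_{k+1}) = \delta^2\, v_k^\top z_k,
\]
using $u_{k+1} - u_k = \delta v_k$ and $y_{k+1} - y_k = \delta z_k$ from~\eqref{eq:sys_ex_exp}.

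The remaining work is to repackage $S_I(x_k, x_{k+1})$ as a valid Krasovskii storage function. First I would define $S_K(x_k, u_k, \Delta_\delta x_k) := S_I(x_k, x_{k+1})/\delta$ (or, writing it in the notation~\eqref{eq:KP_storage_exp}, $\hat S_K(x_k, u_k) := S_I(x_k, F_\delta(x_k, u_k))/\delta$), noting that $x_{k+1}$ is a function of $(x_k, u_k)$ and that $x_{k+1} - x_k = \delta\,\Delta_\delta x_k$, so this is indeed a continuous function of $(x_k, u_k, \Delta_\delta x_k)$. Then $\Delta_\delta S_K = \Delta_\delta S_I(x_k,x_{k+1})/\delta \le \delta\, v_k^\top z_k$; dividing the displayed inequality by $\delta$ absorbs one factor, and I would either rescale once more or simply observe that the $\delta$-dependent constant can be folded into the definition (taking $S_K := S_I(x_k,x_{k+1})/\delta^2$ if one wants the clean form $\Delta_\delta S_K \le v_k^\top z_k$ with $W_K \equiv 0$). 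Finally, the normalization condition $S_K(x^*, u^*, 0) = 0$ follows because at an equilibrium $(x^*, u^*) \in \cE$ we have $x_{k+1} = F_\delta(x^*, u^*) = x^*$, hence $S_I(x^*, x^*) = 0$ by the defining property of incremental passivity.

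The main obstacle, and the point requiring care, is checking that the substitution $(x'_k, u'_k) = (x_{k+1}, u_{k+1})$ is admissible in~\eqref{eq:sys_ax} and that $S_I(x_k, x_{k+1})$ is genuinely expressible through the variables $(\sigma_\delta x_k, u_k, \Delta_\delta \sigma_\delta x_k)$ allowed by Definition~\ref{def:KP} — here it is cleanest to work with the explicit form~\eqref{eq:sys_exp_out} where $\sigma_\delta x_k = x_k$, so that $x_{k+1} = F_\delta(x_k, u_k)$ is an honest function of $(x_k, u_k)$ and no implicit-function gymnastics are needed. A secondary check is bookkeeping of the powers of $\delta$: since $v_k$ and $z_k$ are themselves scaled differences, the two factors of $\delta$ on the right-hand side must be tracked precisely, and the storage function rescaled accordingly so that the final inequality matches~\eqref{eq1:KP} exactly with $W_K \equiv 0$. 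Neither of these is deep, but both are easy to get wrong if one is not deliberate about the discrete-time scalings.
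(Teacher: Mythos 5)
Your proposal is correct and is essentially the paper's own argument: the paper likewise instantiates the incremental inequality~\eqref{eq:IP} with the second copy taken as the one-step-shifted trajectory $(x_{k+1},u_{k+1})$ and defines the Krasovskii storage function as $\hat S_K(x_k,u_k) := S_I(x_k, F_\delta(x_k,u_k))/\delta^2$, so that the two factors of $\delta$ from $u_{k+1}-u_k=\delta v_k$ and $y_{k+1}-y_k=\delta z_k$ are absorbed and $\Delta_\delta \hat S_K \le v_k^\top z_k$ with $W_K\equiv 0$. The only blemish is the momentary $S_I/\delta$ normalization mid-argument, which you correctly replace by $S_I/\delta^2$ at the end.
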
\smallskip
\begin{proof}
The proof is in Appendix~\ref{app:IP}.
\end{proof}\smallskip 

Next, we show that Krasovskii passivity implies shifted passivity with respect to a suitable output function.
\smallskip
\begin{secthm}\label{thm:SP}
Under Assumption~\ref{asm:KP}, a Krasovskii passive discrete-time system~\eqref{eq:sys_exp_out} at $(x^*, u^*) \in \cE$ is shifted passive at the same $(x^*, u^*)$ with respect to the following output:
\begin{align}\label{eq:SP_out}
y_k := \delta \int_0^1 \left. \frac{\partial^\top \hat S_K( F_\delta (x_k, u), u^*)}{\partial u} \right|_{u = s u_k + (1-s) u^*} ds.
\end{align}
\vspace{1ex}
\end{secthm}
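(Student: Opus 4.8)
The plan is to manufacture a shifted-passivity certificate directly out of the Krasovskii storage function, taking $S_S(x) := \delta^2 \hat S_K(x, u^*)$, and to verify the shifted dissipation inequality by combining two facts: a ``frozen-input'' decrease of $\hat S_K(\cdot, u^*)$ that falls out of \eqref{eq:KP} upon setting the new input to zero, and the observation that the output \eqref{eq:SP_out} is exactly the path integral that makes $(u_k - u^*)^\top y_k$ telescope into a difference of storage values along the one-step map $x \mapsto F_\delta(x, u^*)$.

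\smallskip

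\emph{Step 1 (consequences of Krasovskii passivity).} Evaluating \eqref{eq1:KP} along the explicit extended dynamics \eqref{eq:sys_ex_exp}, using $S_K(x_k,u_k,\Delta_\delta x_k)=\hat S_K(x_k,u_k)$ from \eqref{eq:KP_storage_exp}, multiplying by $\delta$, and discarding the nonnegative term $W_K$, I obtain
\[
\hat S_K\bigl(F_\delta(x_k, u_k),\, u_k + \delta v_k\bigr) - \hat S_K(x_k, u_k) \le \delta\, v_k^\top z_k
\]
for all $x_k, u_k, v_k$. Taking $v_k = 0$ and then specializing $u_k = u^*$ gives the key inequality
\[
\hat S_K\bigl(F_\delta(x_k, u^*), u^*\bigr) \le \hat S_K(x_k, u^*), \qquad \forall\, x_k \in \bR^n,
\]
i.e.\ $\hat S_K(\cdot, u^*)$ is nonincreasing along the constant-input flow. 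Moreover, since $\hat S_K \ge 0$ and $\hat S_K(x^*, u^*) = S_K(x^*, u^*, 0) = 0$ (note $\Delta_\delta x^* = 0$ at the equilibrium), the pair $(x^*, u^*)$ is a global minimizer of $\hat S_K$, so the full gradient of $\hat S_K$ vanishes there, in particular $\partial \hat S_K/\partial x$; together with $F_\delta(x^*, u^*) = x^*$ and the chain rule this shows $y^* = 0$, where $y^*$ is the value of \eqref{eq:SP_out} at the equilibrium.

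\smallskip

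\emph{Step 2 (evaluating the supply rate and concluding).} The integrand of \eqref{eq:SP_out} is the gradient in $u$ of the map $u \mapsto \hat S_K(F_\delta(x_k, u), u^*)$, so the fundamental theorem of calculus along the segment $s \mapsto s u_k + (1-s) u^*$ yields
\[
(u_k - u^*)^\top y_k = \delta\Bigl[\hat S_K\bigl(F_\delta(x_k, u_k), u^*\bigr) - \hat S_K\bigl(F_\delta(x_k, u^*), u^*\bigr)\Bigr] = \delta\Bigl[\hat S_K(x_{k+1}, u^*) - \hat S_K\bigl(F_\delta(x_k, u^*), u^*\bigr)\Bigr],
\]
using $x_{k+1} = F_\delta(x_k, u_k)$. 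On the other hand $S_S(x) = \delta^2 \hat S_K(x, u^*)$ is nonnegative, satisfies $S_S(x^*) = 0$, and obeys $\Delta_\delta S_S(x_k) = \delta[\hat S_K(x_{k+1}, u^*) - \hat S_K(x_k, u^*)]$. Subtracting and using $y^*=0$,
\[
(u_k - u^*)^\top (y_k - y^*) - \Delta_\delta S_S(x_k) = \delta\Bigl[\hat S_K(x_k, u^*) - \hat S_K\bigl(F_\delta(x_k, u^*), u^*\bigr)\Bigr] \ge 0
\]
by the frozen-input decrease of Step 1, which is exactly the shifted-passivity inequality.

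\smallskip

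I do not expect a genuine obstacle in the computation: the real content is guessing the certificate $S_S = \delta^2 \hat S_K(\cdot, u^*)$ and recognizing that \eqref{eq:SP_out} is reverse-engineered so that the new supply rate telescopes along the one-step map. The points needing care are regularity — formula \eqref{eq:SP_out} tacitly requires $\hat S_K$ (and $F_\delta$) to be $C^1$ in the relevant arguments, which I would state explicitly or inherit from a $C^1$ Krasovskii storage function and a $C^1$ discretization — and the verification that the constant vector $y^*$ is genuinely zero, which hinges on $(x^*, u^*)$ being an unconstrained minimum of $\hat S_K$ together with $F_\delta(x^*, u^*) = x^*$.
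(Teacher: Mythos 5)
Your proposal is correct and follows essentially the same route as the paper's proof: the certificate $S_S(x)=\delta^2\hat S_K(x,u^*)$, the frozen-input inequality $\hat S_K(F_\delta(x_k,u^*),u^*)\le\hat S_K(x_k,u^*)$ obtained by setting $v_k=0$ and $u_k=u^*$ in the Krasovskii dissipation inequality, and the fundamental-theorem-of-calculus identification of $(u_k-u^*)^\top y_k$ with the telescoped storage difference. Your justification that $y^*=0$ (via $(x^*,u^*)$ being a global minimizer of $\hat S_K$ together with $F_\delta(x^*,u^*)=x^*$) is in fact more explicit than the paper's one-line remark, and the regularity caveat you raise is a fair observation about hypotheses the formula \eqref{eq:SP_out} tacitly requires.
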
\smallskip
\begin{proof}
The proof is in Appendix~\ref{app:SP}.
\end{proof}\smallskip

Also in the continuous-time case, the implication: Krasovskii passivity $\implies$ shifted passivity has been shown \cite[Theorem 2.16]{KKS:21} for input-affine systems. While a discrete-time system~\eqref{eq:sys_exp_out} is not input-affine, the proof idea of Theorem~\ref{thm:SP} is partly borrowed from \cite[Theorem 2.16]{KKS:21}. In fact, Theorem~\ref{thm:SP} gives a new insight for the continuous-time case that the implication can be shown without assuming a system to be input-affine.

The implication of Theorem~\ref{thm:IP}: incremental passivity $\implies$ Krasovskii passivity has not been investigated in the continuous-time case. Its proof is based on discrete-time nature. From Theorem~\ref{thm:IP}, Krasovskii-passivity-based control techniques proposed below are applicable to incrementally passive systems also.



\section{Krasovskii-passivity-based Stabilizing Control for Sampled Discrete-time Systems}\label{sec:stab}
In this section, we develop stabilizing control techniques for sampled discrete-time systems based on Krasovskii passivity. Then, we apply the proposed result to a linear PHS as an illustration.

As in the continuous-time case, stabilizing control design is studied under the following detectability property for the extended system~\eqref{eq:sys_ex}.
\smallskip
\begin{secdefn}
Under Assumption~\ref{asm:KP}, the extended system~\eqref{eq:sys_ex} is said to be \emph{detectable} at~$(x^*, u^*) \in \cE$ if 
\begin{align}
v_k = 
z_k=0, \, \forall k \in \bZ_+
 \implies  \lim_{k \to \infty} (x_k, u_k) = (x^*, u^*).
\end{align}
\red
\end{secdefn}

\subsection{Main Results}
Under Assumption~\ref{asm:KP}, we consider the following discrete-time feedback controller for the extended system~\eqref{eq:sys_ex}:
\begin{align}\label{eq:con_stab}
\Delta_\delta  u_k =   \alpha_\delta (\sigma_\delta u_k, z_k),
\end{align}
where $\alpha_\delta : \bR^m \times  \bR^m \to \bR^m$ is continuous at each $\delta$ such that $\alpha_\delta (u^*, 0) = 0$ {for $(x^*, u^*) \in \cE$}; recall $\sigma_\delta u^* = u^*$ from~\eqref{eq:const}. 

Based on the continuous-time case, we suppose that the controller~\eqref{eq:con_stab} possesses a kind of strict Krasovskii passivity. 
\smallskip
\begin{secasm}\label{asm:stab}
For the controller~\eqref{eq:con_stab}, there exist a continuous function $S_u: \bR^m \to \bR_+$ and $c > 0$ such that $S_u(u^*)=0$ and
\begin{align}\label{eq:con_KP}
\Delta_\delta S_u(u_k) \le - c |\Delta_\delta u_k|^2 - (\Delta_\delta u_k)^\top z_k,
\end{align}
for all $u_k, z_k \in \bR^m$. 
\red
\end{secasm}\smallskip

As the main result of this section, we extend Proposition~\ref{cprop:stab} for continuous-time systems to sampled discrete-time systems as follows.
\smallskip
\begin{secthm}\label{thm:stab}
Under Assumptions~\ref{asm:KP} and~\ref{asm:stab}, consider the closed-loop system consisting of a Krasovskii passive sampled discrete-time system~\eqref{eq:sys} at $(x^*, u^*) \in \cE$ and a discrete-time controller~\eqref{eq:con_stab}. If the closed-loop system is well defined, the following two statements hold:
\begin{itemize}
\item[(a)] 
If the closed-loop system is positively invariant on a compact set $\Omega \subset \bR^n \times \bR^m$ containing $(x^*, u^*)$,  then any trajectory starting from~$\Omega$ converges to the largest invariant set contained in
\begin{align*}
&\{ (x_k, u_k) \in \Omega : \hat z_k \equiv 0, \; W_K (\sigma_\delta x_k, u_k, \Delta_\delta \sigma_\delta x_k ) \equiv 0 \}\\
&\qquad \hat z_k := \alpha_\delta (\sigma_\delta u_k,  q_\delta ( \sigma_\delta x_k, u_k, \Delta_\delta \sigma_\delta x_k)).
\end{align*}

\item[(b)] If the extended system~\eqref{eq:sys_ex} with the {output~$(\hat z_k, W_K)$} is detectable at~$(x^*,u^*)$, then $(x^*, u^*)$ is asymptotically stable.
\end{itemize}
\end{secthm}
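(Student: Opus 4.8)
The plan is to adapt the LaSalle-type argument from the continuous-time Proposition~\ref{cprop:stab} to the sampled discrete-time setting, using the sum (rather than the integral) of the Krasovskii passivity inequality~\eqref{eq1:KP} and the controller inequality~\eqref{eq:con_KP} along closed-loop trajectories. First I would form the composite storage function
\[
\mathcal{S}_k := \hat S_K(x_k,u_k) + S_u(u_k),
\]
which is nonnegative and vanishes at $(x^*,u^*)$ since $S_K(x^*,u^*,0)=0$ and $S_u(u^*)=0$. Along the closed loop, substituting $v_k=\Delta_\delta u_k=\alpha_\delta(\sigma_\delta u_k,z_k)$ and adding~\eqref{eq1:KP} to~\eqref{eq:con_KP}, the cross terms $v_k^\top z_k$ and $-(\Delta_\delta u_k)^\top z_k$ cancel, leaving
\[
\Delta_\delta \mathcal{S}_k \le -W_K(\sigma_\delta x_k,u_k,\Delta_\delta\sigma_\delta x_k) - c\,|\Delta_\delta u_k|^2 \le 0.
\]
So $\mathcal{S}_k$ is nonincreasing; on the compact positively invariant set $\Omega$ it is bounded below, hence $\mathcal{S}_k$ converges, and therefore $\Delta_\delta\mathcal{S}_k \to 0$, which forces $W_K(\sigma_\delta x_k,u_k,\Delta_\delta\sigma_\delta x_k)\to 0$ and $\Delta_\delta u_k \to 0$ as $k\to\infty$.

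For part~(a), I would invoke the discrete-time LaSalle invariance principle on the compact invariant set $\Omega$: every trajectory starting in $\Omega$ converges to the largest invariant subset of $\{(x_k,u_k)\in\Omega : \Delta_\delta\mathcal S_k = 0\}$. On that set $W_K\equiv 0$ and $\Delta_\delta u_k\equiv 0$; the latter combined with the controller equation $\Delta_\delta u_k = \alpha_\delta(\sigma_\delta u_k,z_k)$ gives $\hat z_k = \alpha_\delta(\sigma_\delta u_k, q_\delta(\sigma_\delta x_k,u_k,\Delta_\delta\sigma_\delta x_k)) = \Delta_\delta u_k \equiv 0$, which is exactly the characterization of the limit set in the statement. (One must check that $\Omega$ being positively invariant and the closed loop being well defined gives a genuine forward orbit to which LaSalle applies; this is where Standing Assumption~\ref{sasm} / well-definedness is used.)

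For part~(b), on the limit set we have $\hat z_k\equiv 0$ and $W_K\equiv 0$; I would then argue that along any trajectory confined to this invariant set one in fact has $v_k = \Delta_\delta u_k \equiv 0$ and $z_k \equiv 0$. Indeed $v_k = \Delta_\delta u_k \equiv 0$ was already established; and since $\hat z_k \equiv 0$ reduces, when $\Delta_\delta u_k = 0$, to a condition that should pin down $z_k$ — here I need $\alpha_\delta(u^*,0)=0$ together with the structure of $\alpha_\delta$, or more directly I can feed $v_k\equiv 0$ and the vanishing of $W_K$ back through the definition of $z_k$. Then detectability of the extended system~\eqref{eq:sys_ex} with output $(\hat z_k,W_K)$ at $(x^*,u^*)$ yields $(x_k,u_k)\to(x^*,u^*)$, giving attractivity. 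Combined with the Lyapunov function $\mathcal S_k$ (nonincreasing, zero at the equilibrium, positive-definiteness coming from the detectability/observability structure near the equilibrium) this upgrades to asymptotic stability.

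The main obstacle I anticipate is part~(b), specifically the passage from "$\hat z_k\equiv 0$ and $W_K\equiv 0$ on the limit set" to the hypothesis "$v_k\equiv 0$ and $z_k\equiv 0$" required to apply detectability: one must show the controller output constraint $\hat z_k\equiv 0$ genuinely translates into $z_k\equiv 0$ and not merely into a relation between $z_k$ and $u_k$. This requires exploiting the precise form of $\alpha_\delta$ (its strict Krasovskii passivity from Assumption~\ref{asm:stab}, in particular the $-c|\Delta_\delta u_k|^2$ term forcing $\Delta_\delta u_k=0$, which then collapses $\hat z_k$) and the detectability definition's output being $(\hat z_k, W_K)$ rather than $(z_k,W_K)$ — so actually the detectability assumption is stated precisely so as to absorb this subtlety, and the real work is verifying that the LaSalle limit set is contained in the kernel of that output. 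A secondary, more technical point is justifying the discrete LaSalle principle in the form needed (convergence of $\mathcal S_k$ plus compactness of $\Omega$ plus continuity of the one-step map), which is standard but should be cited.
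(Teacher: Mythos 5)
Your proposal is correct and follows essentially the same route as the paper's proof: the composite storage function $\hat S_K(x_k,u_k)+S_u(u_k)$, cancellation of the cross terms $v_k^\top z_k$ upon adding~\eqref{eq1:KP} and~\eqref{eq:con_KP}, the identification $\hat z_k=\Delta_\delta u_k$ from~\eqref{eq:sys_ex} and~\eqref{eq:con_stab}, and an appeal to a discrete-time invariance principle for both (a) and (b). Your observation that the detectability output $(\hat z_k,W_K)$ is chosen precisely so that $\hat z_k\equiv 0$ (rather than $z_k\equiv 0$) suffices is exactly the point, and no further argument is needed there.
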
\smallskip

\begin{proof}
The proof is in Appendix~\ref{app:stab}.
\end{proof}\smallskip

In the controller equation~\eqref{eq:con_stab},  we design $\Delta_\delta u_k$. This can be understood as an integral-type control design. The assumptions $\alpha_\delta (u^*, 0) = 0$ and $S_u(u^*) = 0$ {correspond to  stabilizing   $u_k$ to $u^*$}.  As mentioned in the continuous-time case, for some applications such as the boost power converter, specifying $u^*$ is enough to stabilize an equilibrium point $(x^*, u^*) \in \cE$ even if $x^*$ depends on unknown parameters. The same robustness is inherited in the sampled discrete-time case as illustrated through the network of boost converters considered in Section~\ref{sec:ex_bst} below.


\subsection{Illustration by Linear Port-Hamiltonian Systems}\label{ex2:LPH}
In this subsection, we apply Theorem~\ref{thm:stab} to linear PHSs in order to obtain a deeper understanding.
As a stabilizing controller, we use the implicit midpoint discretization of the continuous-time stabilizing controller~\eqref{ceq:con_stab}. First, we confirm that this satisfies Assumption~\ref{asm:stab} as follows.
\smallskip
\begin{secprop}\label{prop:con_stab_IMM}
The implicit midpoint discretization of the continuous-time controller~\eqref{ceq:con_stab}:
\begin{align}\label{eq:con_stab_IMM}
K_1 \Delta_\delta u_k =  K_2 (u^* - \sigma_\delta u_k) - z_k,
\end{align}
with $\sigma_\delta$ in~\eqref{eq2:IMM} satisfies Assumption~\ref{asm:stab}.
\end{secprop}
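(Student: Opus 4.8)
The plan is to verify Assumption~\ref{asm:stab} directly by exhibiting the candidate function $S_u(u_k) := \frac{1}{2}|u_k - u^*|_{K_2}^2$ and showing the dissipation inequality~\eqref{eq:con_KP} holds with $c$ equal to (a positive multiple of) the smallest eigenvalue of $K_1$. Since $S_u$ is manifestly continuous with $S_u(u^*) = 0$, the only real content is the inequality. The key computational tool is the product/chain rule for $\Delta_\delta$ applied to a quadratic form, exactly as in Example~\ref{ex:discretization}: namely $\Delta_\delta\!\left(\frac{1}{2}|x_k|_P^2\right) = (\Delta_\delta x_k)^\top P\,\sigma_\delta x_k$ for any constant symmetric $P$.

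First I would apply this identity with $x_k \leftarrow u_k - u^*$ and $P \leftarrow K_2$. Because $u^*$ is constant, $\Delta_\delta(u_k - u^*) = \Delta_\delta u_k$ and, using~\eqref{eq:const}, $\sigma_\delta(u_k - u^*) = \sigma_\delta u_k - u^*$. Hence
\begin{align*}
\Delta_\delta S_u(u_k) = (\Delta_\delta u_k)^\top K_2 (\sigma_\delta u_k - u^*).
\end{align*}
Next I would substitute the controller relation~\eqref{eq:con_stab_IMM}, which rearranges to $K_2(\sigma_\delta u_k - u^*) = -K_1 \Delta_\delta u_k - z_k$. Plugging this in gives
\begin{align*}
\Delta_\delta S_u(u_k) = -(\Delta_\delta u_k)^\top K_1 \Delta_\delta u_k - (\Delta_\delta u_k)^\top z_k \le -\lambda_{\min}(K_1)\,|\Delta_\delta u_k|^2 - (\Delta_\delta u_k)^\top z_k,
\end{align*}
where the last step uses $K_1 \succ 0$. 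Setting $c := \lambda_{\min}(K_1) > 0$ yields exactly~\eqref{eq:con_KP}, and since $\alpha_\delta(u^*,0)=0$ is immediate from~\eqref{eq:con_stab_IMM} evaluated at $\sigma_\delta u_k = u^*$, $z_k = 0$, the proof is complete.

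There is no serious obstacle here; the statement is essentially a sanity check that the discretized integral controller inherits the strict-passivity structure of its continuous-time counterpart. The one subtlety worth flagging is the legitimacy of the quadratic $\Delta_\delta$ identity: it must be invoked with the \emph{same} time-shift operator $\sigma_\delta$ as in~\eqref{eq2:IMM}, which is precisely the hypothesis stated in Proposition~\ref{prop:con_stab_IMM}; with the forward-Euler $\sigma_\delta x_k = x_k$ one would instead pick up an error term of order $\delta |\Delta_\delta u_k|^2$, which could still be absorbed for small $\delta$ but would complicate the constant $c$. I would therefore make the use of~\eqref{eq2:IMM} explicit at the point where the chain rule is applied, and otherwise the argument is a two-line computation.
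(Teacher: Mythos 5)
Your proposal is correct and follows essentially the same route as the paper's proof: the paper also takes $S_u(u) = \tfrac{1}{2}|u-u^*|_{K_2}^2$, expands $\Delta_\delta S_u(u_k)$ via the midpoint factorization of the quadratic difference (which is exactly the identity from Example~\ref{ex:discretization} that you invoke), substitutes the controller relation to obtain $-|\Delta_\delta u_k|_{K_1}^2 - (\Delta_\delta u_k)^\top z_k$, and concludes from $K_1 \succ 0$. Your explicit choice $c = \lambda_{\min}(K_1)$ and the remark about needing $\sigma_\delta$ from~\eqref{eq2:IMM} are fine refinements of the same argument.
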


\begin{proof}
The proof is in Appendix~\ref{app:con_stab_IMM}.
\end{proof}\smallskip

Next, we show that~\eqref{eq:con_stab_IMM} is in fact a stabilizing controller if $u^*$ has a one-to-one correspondence with $(x^*, u^*) \in \cE$.
\smallskip
\begin{secthm}\label{lPHthm:stab}
The closed-loop system consisting of a sampled discrete-time linear PHS~\eqref{lPHeq:sys} and a sampled discrete-time controller~\eqref{eq:con_stab_IMM} is exponentially stable at $(x^*, u^*) \in \cE$ if $H, R \succ 0$, $B$ is of full column rank, and $I_n/\delta - (J - R) H/2$ and
\begin{align*}
A_s : = 
\begin{bmatrix}
I_n/\delta - (J - R)H/2 & - B\\
B^\top H/(2\delta ) & K_1/\delta + K_2/2
\end{bmatrix}
\end{align*}
are invertible. 
\end{secthm}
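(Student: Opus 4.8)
The plan is to establish exponential stability by constructing an explicit closed-loop linear recursion and then exhibiting a quadratic Lyapunov function whose difference along trajectories is negative definite. First I would write the closed-loop dynamics in the coordinates $(\Delta_\delta x_k, u_k - u^*)$, or equivalently work with the extended system~\eqref{lPHeq:sys_ex} together with the controller~\eqref{eq:con_stab_IMM}. Substituting $z_k = B^\top H \Delta_\delta \sigma_\delta x_k$ and $v_k = \Delta_\delta u_k$ into both equations gives a system of linear equations for the "next" increments in terms of the current ones; the invertibility of $A_s$ is precisely what allows one to solve for $(\Delta_\delta \sigma_\delta x_k, \Delta_\delta u_k)$ (or for $x_{k+1}, u_{k+1}$) uniquely, so the closed loop is well defined and takes the form $\eta_{k+1} = M_\delta \eta_k$ for some matrix $M_\delta$ built from the blocks of $A_s$, where $\eta_k := (\Delta_\delta x_k, u_k - u^*)$ (note $\Delta_\delta u_k = \Delta_\delta (u_k - u^*)$ by~\eqref{eq:const}). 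The invertibility of $I_n/\delta - (J-R)H/2$ is what makes~\eqref{lPHeq:sys} itself well defined, i.e.\ Standing Assumption~\ref{sasm} holds.

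Next I would assemble the Lyapunov function from the two storage functions already identified: $V_k := S_K(\Delta_\delta x_k) + S_u(u_k) = \tfrac12 |\Delta_\delta x_k|_H^2 + \tfrac12 |u_k - u^*|_{K_2}^2$, using the storage function~\eqref{IPHeq:storage} from Theorem~\ref{lPHthm:KP} and the $S_u$ produced in the proof of Proposition~\ref{prop:con_stab_IMM}. Adding the Krasovskii passivity inequality~\eqref{lPHeq:KP}, namely $\Delta_\delta S_K = -|H\Delta_\delta\sigma_\delta x_k|_R^2 + v_k^\top z_k$, to the controller inequality~\eqref{eq:con_KP}, namely $\Delta_\delta S_u \le -c|\Delta_\delta u_k|^2 - (\Delta_\delta u_k)^\top z_k$, the cross terms $v_k^\top z_k$ and $-(\Delta_\delta u_k)^\top z_k$ cancel (since $v_k = \Delta_\delta u_k$), leaving
\begin{align*}
\Delta_\delta V_k \le -|H \Delta_\delta \sigma_\delta x_k|_R^2 - c|\Delta_\delta u_k|^2.
\end{align*}
With $H, R \succ 0$ this right-hand side is negative semidefinite, vanishing exactly when $\Delta_\delta \sigma_\delta x_k = 0$ and $\Delta_\delta u_k = 0$.

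To upgrade from this dissipation inequality to exponential stability I would argue that the "bad set" $\{\Delta_\delta\sigma_\delta x_k = 0,\ \Delta_\delta u_k = 0\}$ forces $\eta_k = 0$. If $\Delta_\delta u_k = 0$ then the controller equation~\eqref{eq:con_stab_IMM} gives $K_2(u^* - \sigma_\delta u_k) = z_k = B^\top H\Delta_\delta\sigma_\delta x_k = 0$, hence $\sigma_\delta u_k = u^*$, hence (with $\sigma_\delta u_k = u_k + \tfrac{\delta}{2}\Delta_\delta u_k$) $u_k = u^*$; and if $\Delta_\delta\sigma_\delta x_k = 0$ then from the first line of~\eqref{lPHeq:sys_ex}, $\Delta_\delta^2 x_k = (J-R)H\cdot 0 + B\Delta_\delta u_k = 0$, and combined with $\Delta_\delta x_k$ propagating through the well-defined recursion one concludes $\Delta_\delta x_k \to 0$; more directly, since $B$ has full column rank, tracking the algebraic relations shows $\Delta_\delta x_k = 0$. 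So the only invariant trajectory in the bad set is the equilibrium, and since everything is linear, LaSalle plus linearity (or, cleanly, a strict Lyapunov argument: perturb $V_k$ by a small multiple of a crossterm, or invoke that for a linear system a non-strict Lyapunov function with this observability-type property implies $\rho(M_\delta) < 1$) yields exponential stability. The main obstacle I anticipate is the bookkeeping in solving the coupled implicit equations to get $M_\delta$ cleanly in terms of the blocks of $A_s$ and verifying that the kernel analysis of the dissipation term indeed pins down $\eta_k = 0$ using only $B$ full column rank and $H, R \succ 0$ — this is where the full-rank hypothesis on $B$ is essential and must be used carefully rather than invoked loosely.
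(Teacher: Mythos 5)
Your proposal follows essentially the same route as the paper: the same composite Lyapunov function $V = S_K(\Delta_\delta x_k) + S_u(u_k)$, the same cancellation of the cross terms to obtain $\Delta_\delta V = -|H\Delta_\delta\sigma_\delta x_k|_R^2 - |\Delta_\delta u_k|_{K_1}^2$, the same use of the invertibility of $A_s$ and of $I_n/\delta-(J-R)H/2$ for well-posedness, and the same concluding appeal to a discrete-time invariance principle together with linearity.

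One step in your kernel analysis is under-justified, and the mechanism you name for it is not the right one. From $\Delta_\delta\sigma_\delta x_k = (x_{k+2}-x_k)/(2\delta) \to 0$ you only get that the even and odd subsequences of $x_k$ converge to possibly \emph{different} limits $\tilde x'$ and $\tilde x''$; a period-two oscillation is not excluded by this alone, and it is not excluded by $B$ having full column rank either (that hypothesis serves to make $u^*\mapsto(x^*,u^*)$ one-to-one and, in the paper, plays no role in this particular step). The paper closes the gap by substituting both subsequential limits into the implicit-midpoint dynamics: since $\sigma_\delta x_k \to (\tilde x'+\tilde x'')/2$ and $u_k\to u^*$, both $\lim\Delta_\delta x_{2k} = (\tilde x''-\tilde x')/\delta$ and $\lim\Delta_\delta x_{2k+1} = (\tilde x'-\tilde x'')/\delta$ must equal the same quantity $(J-R)H(\tilde x'+\tilde x'')/2 + Bu^*$, forcing $\tilde x'=\tilde x''$, after which uniqueness of the solution of $(J-R)Hx^*+Bu^*=0$ pins the limit to $x^*$. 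You should replace the sentence ``since $B$ has full column rank, tracking the algebraic relations shows $\Delta_\delta x_k=0$'' with this argument; the rest of your outline is sound.
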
\smallskip

\begin{proof}
The proof is in Appendix~\ref{lPHapp:stab}.
\end{proof}\smallskip 

For linear PHSs, it is standard to assume $H \succ 0$. The requirement $R \succ 0$ can be relaxed since $R$ can be shifted to $R + B K B^\top$ by applying the output feedback $u_k = - K y_k + \hat u_k$. Finally, given $J, R, H$, there always exists a sampling period $\delta > 0$ such that $I_n/\delta  - (J - R) H/2$ and $A_s$ are invertible.


\begin{figure}[t]
\begin{center}
\begin{tikzpicture}[scale=1, transform shape]

\node [text width=7cm, text height=1.5cm, fill=white, dashed, draw=gray, rectangle, rounded corners, text centered] at (3.15,0.2){ };
\draw (0.75,1.3) node
{Sampled model};

\bXInput{A}

\bXStyleBloc{rounded corners,fill=blue!20,text=blue}
\bXBloc[0]{zoh}{Holder}{A}

\bXStyleBloc{rounded corners,fill=black!10,text=blue}
\bXBloc[4]{plant}{Plant}{zoh}
\bXLink[$u(t)$]{zoh}{plant}

\bXStyleBloc{rounded corners,fill=blue!20,text=blue}
\bXBloc[4]{sampler}{Sampler}{plant}
\bXLink[$y(t)$]{plant}{sampler}

\bXOutput[2]{Z}{sampler}
\bXBranchy[5]{Z}{shift}
\bXLineStyle{-}
\bXLinkxy{sampler}{shift}

\bXStyleBloc{rounded corners,fill=blue!20,text=blue}
\bXBloc[-13.45]{controller}{Controller}{shift}
\bXOutput[-13.2]{shiftC}{controller}
\bXDefaultLineStyle
\bXLinkyx{Z}{controller}
\bXLink[$y(k\delta)$]{shift}{controller}
\bXLineStyle{-}
\bXLink[$u(k\delta)$]{controller}{shiftC}
\bXDefaultLineStyle
\bXLinkyx{shiftC}{zoh}

\bXLineStyle{blue, dotted}
\bXBranchy[-4]{plant}{parameters}
\bXLink{parameters}{plant}
{\large{
\bXLabelStyle{text=blue}
\bXLinkName[0.5]{parameters}{$d(t)$}}}

\end{tikzpicture}
\caption{Sampled-data scheme: the input signal of the plant is generated by a
holding device and the output is sampled at the same uniform distributed instants.}
\label{fig:scheme}
\end{center}
\end{figure}
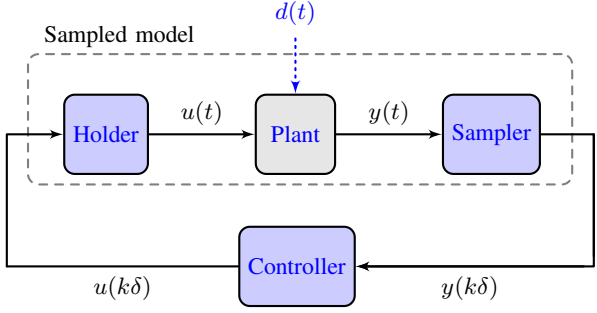



\section{Krasovskii-passivity-based Output Consensus Control for Sampled Discrete-time Systems}\label{sec:oc}
In this section, we propose strict Krasovskii-passivity-based control techniques to achieve output consensus for sampled discrete-time systems under unknown constant disturbance. We first provide an output consensus controller and then apply it to a linear PHS as an illustration. Throughout this section, we do not assume the existence of an equilibrium point.

\subsection{Main Result}
In this section, we consider nonlinear systems under unknown constant disturbance:
\begin{align}\label{eq:sys_d}
\left\{\begin{array}{r@{}l}
\Delta_\delta x_k &{}= g_\delta (\sigma_\delta x_k, u_k, d) \\[0.5ex]
y_k  &{}= h_\delta (\sigma_\delta x_k, d)
\end{array}\right.
\end{align}
with the output $y_k \in \bR^m$, $k \in \bZ_+$ and unknown constant disturbance $d \in \bR^r$, where $g_\delta : \bR^n \times \bR^m \times \bR^r \to \bR^n$ and $h_\delta:\bR^n \times \bR^r \to  \bR^m$ are continuous at each $\delta$. In the continuous-time case, we have supposed that $\partial g/\partial u$ in~\eqref{ceq:sys_d} is of full column rank to guarantee that $\dot x=0$ (and $\ddot x = 0$) implies $\dot u = 0$. Similarly, suppose that
\begin{align}\label{eq:sys_d_fc}
\lim_{k \to \infty} \Delta_\delta \sigma_\delta x_k = 0
\quad \implies \quad
\lim_{k \to \infty} \Delta_\delta u_k = 0.
\end{align}
We also assume that the system is well-defined.

To achieve output consensus, we consider the following output feedback controller:
\begin{align}\label{eq:con_oc}
\left\{\begin{array}{r@{}l}
\Delta_\delta \xi_k  &{}=   \beta_\delta ( \sigma_\delta \xi_k, \sigma_\delta y_k)\\[0.5ex]
u_k  &{}= E \xi_k,
\end{array}\right.
\end{align}
with the controller state $\xi_k \in \bR^p$, $k \in \bZ_+$, where $\beta_\delta : \bR^p \times \bR^m \to \bR^p$ is continuous at each $\delta$, and $E \in \bR^{m \times p}$ is such that ${\rm rank} \; E^\top = m-1$ and $E^\top \1_m = 0$.

Based on the continuous-time case, we require the following properties including Krasovskii passivity for the controller. 
\smallskip
\begin{secasm}\label{asm:oc}
For the controller~\eqref{eq:con_oc}, there exists a continuous function $S_y: \bR^m \to \bR_+$ such that 
\begin{align}\label{eq1:con_oc_KP}
\Delta_\delta S_y(y_k) \le  - (\Delta_\delta u_k)^\top \Delta_\delta y_k,
\end{align}
for all $\xi_k \in \bR^p$ and $y_k \in \bR^m$. Also, it follows that
\begin{align}\label{eq2:con_oc_KP}
\lim_{k \to \infty} \Delta_\delta u_k = 0
\quad \implies \quad
\lim_{k \to \infty} E^\top \sigma_\delta y_k = 0
\end{align}
for all $\xi_k \in \bR^p$ and $y_k \in \bR^m$.
\red
\end{secasm}\smallskip

As the main result of this section, we show that a sampled discrete-time controller~\eqref{eq:con_oc} achieves output consensus for a strictly Krasovskii passive sampled discrete-time system.
\smallskip
\begin{secthm}\label{thm:oc}
Given $d \in \bR^r$, consider the closed-loop system consisting of a strictly Krasovskii passive sampled discrete-time system~\eqref{eq:sys} and a discrete-time output feedback controller~\eqref{eq:con_oc} satisfying~\eqref{eq:sys_d_fc} and Assumption~\ref{asm:oc}, respectively.
If the closed-loop system is well defined and is positively invariant on a compact set $\Omega \subset \bR^n \times \bR^p$,  then for each $(x_0, \xi_0) \in \Omega$, there exists some $\{\alpha_k\}_{k \in \bZ_+} \in \bR^{\bZ_+}$ such that
\begin{align}\label{eq:oc}
\lim_{k \to \infty} (y_k - \alpha_k \1_m)= 0
\end{align}
holds.
\end{secthm}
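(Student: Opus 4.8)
The plan is to run a LaSalle-type argument on the closed-loop system, exactly mirroring the continuous-time output-consensus result (Proposition after \eqref{ceq:con_oc}). The total storage function is the sum of the Krasovskii storage $S_K$ of the plant and the controller storage $S_y$ from Assumption~\ref{asm:oc}. First I would apply $\Delta_\delta$ to the closed-loop equations and combine the strict Krasovskii passivity inequality \eqref{eq1:KP} of the plant with the controller inequality \eqref{eq1:con_oc_KP}. Since the interconnection is power-preserving — the controller output is $u_k = E\xi_k$, so $\Delta_\delta u_k$ plays the role of the input $v_k$ to the extended plant, and the cross terms $v_k^\top z_k$ and $-(\Delta_\delta u_k)^\top \Delta_\delta y_k$ cancel (using $z_k = \Delta_\delta y_k$ from strict Krasovskii passivity) — I get
\begin{align*}
\Delta_\delta\bigl( S_K(x_k,u_k,\Delta_\delta x_k) + S_y(y_k) \bigr) \le - W_K(\sigma_\delta x_k, u_k, \Delta_\delta \sigma_\delta x_k) \le 0 .
\end{align*}

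Next I would invoke positive invariance of $\Omega$ together with continuity of all the data at each $\delta$ to conclude boundedness of the trajectory, and then apply the discrete-time invariance principle (LaSalle for difference equations): the trajectory converges to the largest invariant subset of $\{ W_K = 0 \}$. By strict Krasovskii passivity \eqref{eq2:KP}, $W_K = 0$ forces $\Delta_\delta \sigma_\delta x_k = 0$ on the limit set, hence $\lim_{k\to\infty}\Delta_\delta \sigma_\delta x_k = 0$. Feeding this through the rank-type hypothesis \eqref{eq:sys_d_fc} gives $\lim_{k\to\infty}\Delta_\delta u_k = 0$, and then the second controller property \eqref{eq2:con_oc_KP} in Assumption~\ref{asm:oc} yields $\lim_{k\to\infty} E^\top \sigma_\delta y_k = 0$.

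Finally I would translate $E^\top \sigma_\delta y_k \to 0$ into the consensus statement \eqref{eq:oc}. Because $\operatorname{rank} E^\top = m-1$ with $E^\top \1_m = 0$, the kernel of $E^\top$ is exactly $\operatorname{span}\{\1_m\}$, so $E^\top \sigma_\delta y_k \to 0$ means $\sigma_\delta y_k$ approaches $\operatorname{span}\{\1_m\}$; writing $\alpha_k := \1_m^+ \sigma_\delta y_k = (\1_m^\top \1_m)^{-1}\1_m^\top \sigma_\delta y_k$ (the average), one gets $\sigma_\delta y_k - \alpha_k \1_m \to 0$. To pass from $\sigma_\delta y_k$ back to $y_k$, note $y_k = \sigma_\delta y_k - \tfrac{\delta}{2}\Delta_\delta y_k$ (for the midpoint-type shift; more generally $y_k$ differs from $\sigma_\delta y_k$ by a term proportional to $\Delta_\delta y_k$), and $\Delta_\delta y_k \to 0$ follows from $\Delta_\delta \sigma_\delta x_k \to 0$, boundedness, and continuity of $h_\delta$. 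Hence $y_k - \alpha_k \1_m \to 0$, after a harmless reindexing of $\alpha_k$.

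The main obstacle I anticipate is not the energy inequality — that is a routine telescoping once the cancellation is set up — but the careful bookkeeping between $y_k$, $\sigma_\delta y_k$, $\Delta_\delta y_k$ and between $x_k$, $\sigma_\delta x_k$, $\Delta_\delta \sigma_\delta x_k$: showing that $\Delta_\delta \sigma_\delta x_k \to 0$ genuinely propagates (via continuity and boundedness on the compact invariant set) to $\Delta_\delta y_k \to 0$ and thence to the difference $y_k - \sigma_\delta y_k \to 0$, so that the consensus conclusion for $\sigma_\delta y_k$ transfers to $y_k$. One must also be slightly careful that the discrete invariance principle applies here, which is where well-definedness of the closed loop and compactness of $\Omega$ are used.
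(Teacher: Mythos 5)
Your proposal is correct and follows the same skeleton as the paper's proof: the composite storage $S_K+S_y$, cancellation of the supply terms, the discrete-time invariance principle giving $W_K\to 0$, strictness~\eqref{eq2:KP} giving $\Delta_\delta\sigma_\delta x_k\to 0$, and then~\eqref{eq:sys_d_fc} followed by~\eqref{eq2:con_oc_KP} giving $E^\top\sigma_\delta y_k\to 0$. The only genuine divergence is the endgame. The paper argues that $\Delta_\delta\sigma_\delta x_k\to 0$ yields a limit $\tilde x$ of $\sigma_\delta x_k$, hence $y_k\to\tilde y:=h_\delta(\tilde x,d)$ and $\sigma_\delta y_k\to\tilde y$, so $E^\top y_k=E^\top\tilde y=0$; you instead project $\sigma_\delta y_k$ onto $\ker E^\top=\mathrm{span}\{\1_m\}$, define $\alpha_k$ as the average, and transfer from $\sigma_\delta y_k$ to $y_k$ by showing $\sigma_\delta y_k-y_k\to 0$ (which follows from $\Delta_\delta y_k\to 0$, boundedness on $\Omega$, uniform continuity of $\sigma_\delta$, and the normalization~\eqref{eq:const}, not only for the midpoint rule). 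Your route is in fact slightly more robust: vanishing successive differences of a bounded sequence do not by themselves imply convergence to a single limit, so the paper's intermediate claim that $\sigma_\delta x_k$ converges is stronger than what is needed, whereas your argument only uses that the disagreement component of $y_k$ vanishes, which is exactly the assertion~\eqref{eq:oc} with a time-varying $\alpha_k$. Both arguments deliver the theorem; yours buys a cleaner match with the statement at the cost of a little extra bookkeeping between $y_k$, $\sigma_\delta y_k$, and $\Delta_\delta y_k$.
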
\smallskip

\begin{proof}
The proof is in Appendix~\ref{app:oc}.
\end{proof}\smallskip


\subsection{Illustration by Linear Port-Hamiltonian Systems}\label{ex3:LPH}
In this subsection, we design an output feedback controller to achieve output consensus for a linear PHS under unknown constant disturbance:
\begin{align*}
\left\{\begin{array}{r@{}l}
\dot x &{}=   (J - R) H x  + B u + d \\[0.5ex]
y &{}=  B^\top H x.
\end{array}\right.
\end{align*}
By applying the implicit midpoint method~\eqref{eq:IMM}, we get the following sampled discrete-time model:
\begin{align}\label{lPHeq:sys_d}
\left\{\begin{array}{r@{}l}
\Delta_\delta x_k &{}= (J - R) H \sigma_\delta x_k + B u_k + d\\[0.5ex]
y_k &{}=  B^\top H \sigma_\delta x_k.
\end{array}\right.
\end{align}
This is a sampled discrete-time linear PHS~\eqref{lPHeq:sys} with unknown constant disturbance $d \in \bR^n$. 

Here, we consider achieving a more general consensus, weighted output consensus:
\begin{align}\label{eq:woc}
\lim_{k \to \infty} M y_k = \alpha \1_m,
\end{align}
where $M \in \bR^{m \times m}$ denotes the weight.

As a weighted output consensus controller, we use the implicit midpoint discretization of a continuous-time output feedback controller proposed by~\cite{KCF:22}:
\begin{align}\label{eq1:con_oc_IMM}
\left\{\begin{array}{r@{}l}
\Delta_\delta \xi_k &{}= - E^\top M \sigma_\delta y_k\\[0.5ex]
\Delta_{\delta} \rho_k  &{}= \sigma_\delta y_k - \sigma_\delta \rho_k\\[0.5ex]
u_k &{}=M^\top E \xi_k - K (y_k - \rho_k)
\end{array}\right.
\end{align}
with the controller state $(\xi_k, \rho_k) \in \bR^p \times \bR^m$ and $\sigma_\delta$ in~\eqref{eq2:IMM}, where $0 \preceq K \in \bR^{m \times m}$ is a tuning parameter. This controller contains the implicit midpoint discretization of the continuous-time output consensus controller~\eqref{ceq:con_oc} when $M=I_m$ and $K = 0$, which satisfies Assumption~\ref{asm:oc}.

Computing $\Delta_\delta u_k$, this controller can be rewritten as 
\begin{align}\label{eq2:con_oc_IMM}
\left\{\begin{array}{r@{}l}
\Delta_\delta  u_k &{}= - M^\top E E^\top M \sigma_\delta y_k - K ( \Delta_\delta y_k - \Delta_\delta \rho_k)\\[0.5ex]
\Delta_{\delta} \rho_k  &{}= \sigma_\delta y_k - \sigma_\delta \rho_k.
\end{array}\right.
\end{align} 
Using this representation, we can now show weighted output consensus for a linear PHS~\eqref{lPHeq:sys_d}.
\smallskip
\begin{secthm}\label{lPHthm:oc}
Given $d \in \bR^n$, the closed-loop system consisting of a sampled discrete-time linear PHS~\eqref{lPHeq:sys_d} and a sampled discrete-time output feedback controller~\eqref{eq2:con_oc_IMM} achieves weighted output consensus~\eqref{eq:woc} for each $(x_0, u_0, \rho_0) \in \bR^n \times \bR^m \times \bR^m$ if $H, R \succ 0$, $K \succ 0$, $B$ is of full column rank, and $M$, $I_n/\delta - (J-R)H/2$, and
\begin{align*}
&A_c : = 
\begin{bmatrix}
I_n/\delta - (J - R)H/2 & - B & 0\\
A_{c, 12}& I_m/\delta & K/\delta \\
- B^\top H/4 & 0 & I_m/\delta + I_m/2
\end{bmatrix}\\[0.5ex]
&\quad 
A_{c, 12} = M^\top E E^\top M B^\top H/4 + K B^\top H/ (2\delta )
\end{align*}
are invertible. Moreover, the consensus value $\alpha$ is the same for all $(x_0, u_0, \rho_0) \in \bR^n \times \bR^m \times \bR^m$, and the closed-loop system is exponentially stable at some $(x^*, u^*, \rho^*) \in \bR^n \times \bR^m \times \bR^m$.
\end{secthm}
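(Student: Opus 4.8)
The plan is to reduce the claim to two ingredients: (i) verify that the controller~\eqref{eq2:con_oc_IMM} (equivalently \eqref{eq1:con_oc_IMM}) satisfies the Krasovskii-type properties in Assumption~\ref{asm:oc} (with the weight $M$ incorporated), so that Theorem~\ref{thm:oc} can be invoked to obtain $\lim_{k\to\infty}(My_k-\alpha_k\1_m)=0$ for some scalar sequence $\{\alpha_k\}$; and (ii) show that the invertibility hypotheses on $M$, $I_n/\delta-(J-R)H/2$, and $A_c$ force the closed loop to be, after a coordinate shift, a linear time-invariant system with a Schur state matrix, so that $\{\alpha_k\}$ actually converges to a constant $\alpha$ independent of initial data and the shifted state converges exponentially. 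I would present (i) first as the ``soft'' part and then spend the bulk of the proof on (ii).

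For (i), apply $\Delta_\delta$ to the PHS output and to the controller. With $S_y(y_k):=\tfrac12|y_k|^2$ one computes, exactly as in Example~\ref{ex:discretization} and in the proof of Theorem~\ref{lPHthm:KP}, that $\Delta_\delta S_y(y_k)=(\Delta_\delta\sigma_\delta y_k)^\top(\cdot)$ and, using the first line of~\eqref{eq2:con_oc_IMM} together with $E^\top\1_m=0$ and $K\succ0$, that the cross term telescopes into $-(\Delta_\delta u_k)^\top\Delta_\delta y_k$ plus a nonpositive term $-|K^{1/2}(\Delta_\delta y_k-\Delta_\delta\rho_k)|^2$, giving~\eqref{eq1:con_oc_KP}. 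For~\eqref{eq2:con_oc_KP}: if $\Delta_\delta u_k\to0$ then from the first line of~\eqref{eq2:con_oc_IMM} and $\Delta_\delta y_k=B^\top H\,\Delta_\delta\sigma_\delta x_k$, the combination $M^\top EE^\top M\,\sigma_\delta y_k + K(\Delta_\delta y_k-\Delta_\delta\rho_k)\to0$; invertibility of $M$ and $\operatorname{rank}E^\top=m-1$, $E^\top\1_m=0$ then isolate $E^\top M\sigma_\delta y_k$, which is what we need (up to absorbing $M$ into the consensus statement). Combined with strict Krasovskii passivity of~\eqref{lPHeq:sys_d} (Theorem~\ref{lPHthm:KP}, valid since $H,R\succ0$) and the full-column-rank implication~\eqref{eq:sys_d_fc} — which here reads $\Delta_\delta\sigma_\delta x_k\to0\Rightarrow\Delta_\delta u_k\to0$ and follows from $B$ full column rank applied to the differenced controller — Theorem~\ref{thm:oc} yields $My_k-\alpha_k\1_m\to0$.

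For (ii), the key observation is that the whole closed loop (plant plus controller, in the variables $x_k$, $u_k$, $\rho_k$, using the rewritten controller~\eqref{eq2:con_oc_IMM}) is linear and, after solving the implicit midpoint relations, can be put in explicit one-step form $w_{k+1}=A_{\mathrm{cl}}w_k+b$: the matrix $A_c$ is precisely the ``implicit'' coefficient matrix whose invertibility lets one solve for $(\Delta_\delta\sigma_\delta x_k,\Delta_\delta u_k,\Delta_\delta\sigma_\delta\rho_k)$ (hence for $w_{k+1}$) in terms of $w_k$; invertibility of $I_n/\delta-(J-R)H/2$ plays the analogous role for the plant block alone. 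I would then show that $\mathbf 1$-type consensus directions are the only undamped modes: the zero row/column structure coming from $E^\top\1_m=0$ and $M^\top E E^\top M$ produces a single eigenvalue on the consensus subspace, which accounts for the drift $\alpha_k$, while the complementary dynamics — driven by the strictly passive storage $\tfrac12|\Delta_\delta x_k|_H^2+(\text{controller terms})$ with strict decrease guaranteed by $R\succ0$, $K\succ0$ — are contractive. Concretely, use the Lyapunov function from (i) and Theorem~\ref{lPHthm:KP} restricted to $\operatorname{range}E^\top$ (the ``error'' coordinates) to certify a strict discrete Lyapunov inequality there, which for a linear system gives a Schur error-subsystem; hence $\alpha_k\to\alpha$ with $\alpha$ a fixed linear functional of $(x_0,u_0,\rho_0)$, and one checks this functional is constant on the reachable set — actually the cleanest route is that $\Delta_\delta\alpha_k\to0$ and $\alpha_k$ is bounded with $\alpha_{k+1}-\alpha_k$ summable, forcing convergence. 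Finally, shifting $w_k$ by the (unique, by invertibility of $I-A_{\mathrm{cl}}$ on the complementary subspace) fixed point $w^*=(x^*,u^*,\rho^*)$ gives exponential stability there.

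The main obstacle is step (ii): extracting from the three stated invertibility conditions a clean decomposition of the closed-loop matrix into a one-dimensional consensus mode (eigenvalue governing $\alpha_k$) and a Schur-stable error block, and verifying that the strict Lyapunov decrease from Krasovskii passivity survives the implicit-midpoint inversion with the weight $M$ and the damping $K$ present. In particular one must be careful that $M$ enters the controller as $M^\top E E^\top M$ and $M$, not symmetrically, so the energy bookkeeping needs the substitution $\tilde y_k=My_k$ (or an $M$-weighted inner product) before the passivity inequality from (i) can be read as a genuine Lyapunov function for the error dynamics; the rest is linear-algebra verification that $A_c$, $M$, and $I_n/\delta-(J-R)H/2$ invertible is exactly what makes the explicit recursion and the fixed point $(x^*,u^*,\rho^*)$ well defined.
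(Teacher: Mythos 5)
Your overall architecture (a Lyapunov function combining the plant's Krasovskii storage with a controller storage, LaSalle, then linear algebra for the fixed point) matches the paper's, but there are three concrete gaps. First, you cannot obtain the stated \emph{global} conclusion by invoking Theorem~\ref{thm:oc}: that theorem presupposes positive invariance on a compact set $\Omega$, whereas Theorem~\ref{lPHthm:oc} claims convergence for every $(x_0,u_0,\rho_0)\in\bR^n\times\bR^m\times\bR^m$. The paper supplies compactness by proving that $V(x_k,u_k,\rho_k)=S_K(\Delta_\delta x_k)+\tfrac12\bigl(|E^\top M y_k|^2+|y_k-\rho_k|_K^2\bigr)$ is \emph{positive definite} at a point $(x^*,u^*,\rho^*)$; this hinges on existence and uniqueness of the solution to the equilibrium equations $0=(J-R)H\tilde x+B\tilde u+d$, $My=\alpha\1_m$ (cited from \cite{FKC:22}), a step your proposal never performs. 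That same uniqueness is what makes $\alpha$ independent of the initial condition; your mechanism of "an undamped consensus mode whose value is a linear functional of the initial data, checked to be constant on the reachable set" is structurally wrong here: under $R\succ0$, $K\succ0$ and $B$ full column rank there is \emph{no} marginal mode, the closed loop converges to a single point, and a genuinely undamped direction would contradict the exponential stability you are asked to prove. Second, your controller storage $S_y(y_k)=\tfrac12|y_k|^2$ does not satisfy~\eqref{eq1:con_oc_KP} for the controller~\eqref{eq2:con_oc_IMM}; the correct choice is $\tfrac12(|E^\top M y_k|^2+|y_k-\rho_k|_K^2)$, for which the midpoint telescoping gives exactly $-(\Delta_\delta y_k)^\top\Delta_\delta u_k-|\Delta_\delta y_k-\Delta_\delta\rho_k|_K^2$. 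You gesture at the $M$-weighting at the end, but the inequality as you state it in part (i) is false.

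Third, even granting the LaSalle step, the conclusion $\Delta_\delta\sigma_\delta x_k=(x_{k+2}-x_k)/(2\delta)\to0$ only yields convergence of the even and odd subsequences of $x_k$ to possibly distinct limits $\tilde x'$, $\tilde x''$; the paper closes this with an explicit argument (substituting both subsequential limits into the dynamics to force $(\tilde x''-\tilde x')/\delta=(\tilde x'-\tilde x'')/\delta$, hence $\tilde x'=\tilde x''$) before it can speak of a limit of $u_k$, $y_k$, and $\rho_k$ and conclude exponential stability. Your spectral decomposition route in (ii) could in principle replace all of this by directly showing $A_{\mathrm{cl}}$ is Schur, which would be a legitimately different and arguably cleaner proof for this linear special case, but as written it is built on the incorrect premise of a surviving consensus eigenvalue and does not address the period-two ambiguity inherent to the midpoint operator $\sigma_\delta$.
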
\smallskip

\begin{proof}
The proof is in Appendix~\ref{lPHapp:oc}.
\end{proof}\smallskip 

The essence of the proof of Theorem~\ref{lPHthm:oc} is based on that of Theorem~\ref{thm:oc}. Especially for $K = 0$, weighted output consensus follows from Theorem~\ref{thm:oc}. Exponential stability can be shown by slightly modifying the proof of Theorem~\ref{lPHthm:oc}; for more details, see our preliminary version~\cite[Theorem 8]{KMC:22} focusing on this case. Note that, when $M=I_m$ we conclude that the implicit midpoint discretization of a continuous-time controller~\eqref{ceq:con_oc} achieves output consensus for each $(x_0, u_0) \in \bR^n \times \bR^m$, and the closed-loop system is exponentially stable. If $K \succeq 0$, output consensus can be shown on a compact invariant set of $(x_k, u_k, \rho_k)$, $k \in \bZ_+$. 




\section{Examples}\label{sec:ex}
In this section, we apply the proposed sampled control techniques to Krasovskii passive nonlinear power networks.
First, we consider voltage regulation (\emph{i.e.}, stabilization) for a DC network of boost converters and current sharing
(\emph{i.e.}, output consensus control) for a DC network of buck converters with nonlinear loads.

\subsection{Stabilization of Boost Converters}\label{sec:ex_bst}


\begin{figure}[t]
\begin{center}
\ctikzset{bipoles/length=0.7cm}
\begin{circuitikz}[scale=1, transform shape]
\ctikzset{current/distance=1}
\draw
node[] (Ti) at (0,0) {}
node[] (Tj) at ($(5.4,0)$) {}
node[ocirc] (Aibattery) at ([xshift=-4.5cm,yshift=1.1cm]Ti) {}
node[ocirc] (Bibattery) at ([xshift=-4.5cm,yshift=-1.1cm]Ti) {}
(Aibattery) to [battery, l_=\scriptsize{$V^*_{si}$},*-*] (Bibattery) {}
node [rectangle,draw, minimum width=1cm,minimum height=2.4cm] (boosti) at ($0.5*(Aibattery)+0.5*(Bibattery)+(2.6,0)$) {\scriptsize{Switch$_i$}}
node[ocirc] (Ai) at ($(Aibattery)+(1.8,0)$) {}
node[ocirc] (Bi) at ($(Bibattery)+(1.8,0)$) {}
(Ai) to [short] ([xshift=0.3cm]Ai)
(Bi) to [short] ([xshift=0.3cm]Bi)
(Aibattery) 
to [short,i=\scriptsize{$I_{si}$}]($(Aibattery)+(0.6,0)$){}
to [L, l=\scriptsize{$L_{si}$}]($(Aibattery)+(1.9,0)$){}
(Bibattery) to [short] ([xshift=2.1cm]Bibattery)
node[ocirc] (AAi) at ($(Ai)+(1.6,0)$) {}
node[ocirc] (BBi) at ($(Bi)+(1.6,0)$) {}
(AAi) to [short] ([xshift=-0.3cm]AAi)
(BBi) to [short] ([xshift=-0.3cm]BBi)
(AAi) to [short,i=\scriptsize{$(1-u_i)I_{si}$}]($(AAi)+(0.3,0)$){}
(AAi) to [short] ([xshift=0.7cm]AAi)
(BBi) to [short] ([xshift=0.7cm]BBi);
\begin{scope}[shorten >= 10pt,shorten <= 10pt,]
\draw[<-] (Ai) -- node[left] {\scriptsize{$(1-u_{i})V_i$}} (Bi);
\end{scope};
\draw
($(Ti)+(0.0,1.1)$) node[anchor=north]{\scriptsize{$V_{i}$}}
($(Ti)+(0.0,1.1)$) node[ocirc](PCCi){}
($(Ti)+(0.65,1.1)$)--($(Ti)+(0.65,0.8)$) to [short,i_>=\scriptsize{$I_{li}(V_i)$}]($(Ti)+(0.65,0.5)$)
to [I]($(Ti)+(0.65,-1.1)$)
($(Ti)+(-0.4,1.1)$) to [C, l_=\scriptsize{$C_{i}$}] ($(Ti)+(-0.4,-1.1)$)
($(Ti)+(-0.4,1.1)$) to [short] ($(Ti)+(1.0,1.1)$)
($(Ti)+(1.8,1.1)$) to [short,i_=\scriptsize{$I_{k}$}] ($(Ti)+(2.0,1.1)$)--($(Ti)+(2.0,1.1)$)
($(Ti)+(0.6,1.1)$)--($(Ti)+(0.8,1.1)$) to [R, l=\scriptsize{$R_{k}$}] ($(Ti)+(2.0,1.1)$) {}
to [L, l={\scriptsize{$L_{k}$}}, color=black]($(Tj)+(-2.5,1.1)$){}
($(Ti)+(-0.4,-1.1)$) to [short] ($(Tj)+(-2.5,-1.1)$);
\draw
    			node [rectangle,draw,minimum width=6.15cm,minimum height=3.4cm,dashed,color=gray,label=\textbf{Node $i$},densely dashed, rounded corners] (DGUi) at ($0.5*(Aibattery)+0.5*(Bibattery)+(2.35,0.2)$) {}
    			node [rectangle,draw,minimum width=1.8cm,minimum height=3.4cm,dashed,color=gray,label=\textbf{Line $k$},densely dashed, rounded corners] (DGUi) at ($0.5*(Aibattery)+0.5*(Bibattery)+(6.4,0.2)$) {};
\end{circuitikz}
\caption{Electrical scheme of node $i$ and line $k$, where $I_{li}(V_i) = G_{li}^\ast V_i + I_{li}^\ast$.}
\label{fig:microgrid1}
\end{center}
\end{figure}
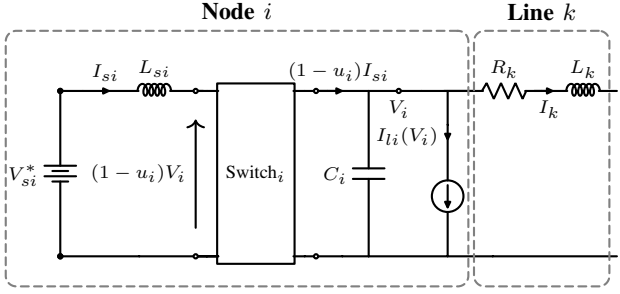

Consider a DC microgrid of boost converters in the presence of constant impedance and  current loads  with $\nu$ nodes and $\mu$ edges (see \emph{e.g.},~\cite{CLK:19}):
\begin{align}\label{bsteq:sys}
\left\{\begin{array}{r@{}l}
L_s \dot I_s &{}= - (I_\nu - {\rm diag}\{u\}) V + V^*_s\\[0.5ex]
C \dot V &{}= (I_\nu - {\rm diag}\{u\} )I_s - G^*_l V - I^*_l + D I\\[0.5ex]
L \dot I &{}= -D^\top V - RI,
\end{array}\right.
\end{align}
with the state $I_s, V \in \bR^\nu$ and $I \in \bR^\mu$, and input $u \in [0,1]^\nu$, representing the duty ratio of each boost converter. Moreover, $V^*_s, I^*_l \in \bR^\nu$, while $L_s, C, G^*_l \in \bR^{\nu \times \nu}$ and $L, R \in \bR^{\mu \times \mu}$ are diagonal and positive definite matrices; see Fig.~\ref{fig:microgrid1} and Table~\ref{tab:symbols} for the meaning of each used symbol. The incidence matrix $D \in \bR^{\nu \times \mu}$ describes the network topology.

\begin{table}
\begin{center}
\caption{Description of the used symbols}\label{tab:symbols}
\begin{tabular}{clcl}
\toprule
&{\bf State variables} & &{\bf Loads}\\
\midrule
$I_s$						& Generated current &$G^*_l$						& Load conductance\\
$V$						& Load voltage &$I^*_l$						& Load current\\
$I$ 						& Line current \\
\midrule
&{\bf Inputs} & &{\bf Filters and lines}\\
\midrule
$u$						& Control input &$L_s, C$						& Filter inductance, capacitance\\
$V^*_s$					& Voltage source &$R, L$						& Line resistance, inductance\\
\bottomrule
\end{tabular}
\end{center}
\end{table}
\begin{table}[t]
\begin{center}
\label{tab:symbols2}
\caption{Description of the used symbols}
\begin{tabular}{clcl}
\toprule
$\varphi$ & Flux (node) & $G^*_L$ & Load conductance\\
$q$ & Charge & $I^*_L$ & Load current\\
$\varphi_t$ & Flux (power line)& $P^*_L$ & Load power\\
$u$ & Control input & $R, L, C$ & Filter parameters\\
$d$ & Disturbance & $R_t, L_t$ & Line parameters\\
\bottomrule
\end{tabular}
\end{center}
\end{table}
The control goal is voltage regulation, \emph{i.e.},
\begin{align}\label{bsteq:volreg}
\lim_{t \to \infty} V(t) = V^*,
\end{align}
where $V_i^* \geq V_{si}^*$ for all $i =1,\dots,\nu$.
We use the sampled discrete-time controller~\eqref{eq:con_stab_IMM} to solve the voltage regulation problem.
As a discretization method for the network of boost converters~\eqref{bsteq:sys}, we consider the implicit midpoint method, {\emph{i.e.}},
\begin{align}\label{bsteq:sys_IMM}
\left\{\begin{array}{r@{}l}
L_s \Delta_\delta I_{s,k} &{}= -  (I_\nu - {\rm diag}\{u_k\}) \sigma_\delta V_k + V^*_s\\[0.5ex]
C \Delta_\delta V_k &{}=  (I_\nu - {\rm diag}\{u_k\}) \sigma_\delta I_{s,k} - G^*_l \sigma_\delta V_k\\[0.5ex]
&\qquad - I^*_l + D \sigma_\delta I_k\\[0.5ex]
L \Delta_\delta I_k &{}= -D^\top \sigma_\delta V_k - R \sigma_\delta I_k,
\end{array}\right.
\end{align}
with $\sigma_\delta$ in~\eqref{eq2:IMM}. Note that, since $L_s, C \succ 0$ and $L \succ 0$, the system~\eqref{bsteq:sys_IMM} is well defined, \emph{i.e.}, the matrix
\begin{align*}
&\Pi :=\\
&\begin{bmatrix}
L_s/\delta & (I_\nu - {\rm diag}\{u_k\})/2  & 0\\
-(I_\nu - {\rm diag}\{u_k\})/2 & C/\delta + G_l^*/2& -D/2 \\
0 & D^\top/2 & L/\delta + R/2 
\end{bmatrix}
\nonumber
\end{align*}
is invertible for any $\delta > 0$.
Moreover, given $V^* \in \bR^\nu$ with $V_i^* \geq V_{si}^*$, $i =1,\dots,\nu$, the system~\eqref{bsteq:sys_IMM} has a unique equilibrium point $((I_s^*, V^*, I^*), u^*)$ satisfying
\begin{align}\label{bsteq:eq}
\left\{\begin{array}{r@{}l}
(I_\nu - {\rm diag}\{u^*\})  V^* &{}= V_s^* \\[0.5ex]
(I_\nu - {\rm diag}\{u^*\}) I_s^* &{}= G^*_l V^* + I^*_l  - D I^* \\[0.5ex]
I^* &{}= -R^{-1}D^\top V^*.
\end{array}\right.
\end{align}
Note also that $V_i^* \geq V_{si}^*$ guarantees $u_i^* \in [0, 1)$ for each $i =1,\dots,\nu$. 
For the sampled discrete-time controller~\eqref{eq:con_stab_IMM}, we only require information of $u^*$ which can be computed from the (known) values of the desired voltage $V^*$ and voltage source $V_s^*$ (see the first line of~\eqref{bsteq:eq}), without requiring any information on the other system parameters. In this sense, the sampled controller~\eqref{eq:con_stab_IMM} is robust with respect to parameter uncertainty. 

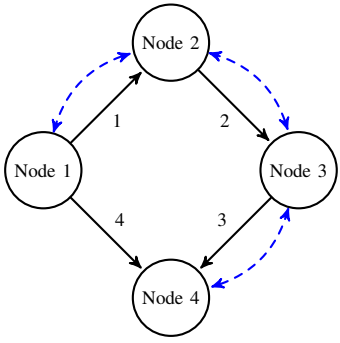
\begin{figure}[t]
\begin{center}
\begin{tikzpicture}[scale=0.8,transform shape,->,>=stealth',shorten >=1pt,auto,node distance=3cm,
                    semithick]
  \tikzstyle{every state}=[circle,thick,fill=white,draw=black,text=black]

  \node[state] (A)                    {Node 1};
  \node[state]         (B) [above right of=A] {Node 2};
  \node[state]         (D) [below right of=A] {Node 4};
  \node[state]         (C) [below right of=B] {Node 3};

  \path (A) edge   [below] node {~~~1} (B)
  		edge 	     node {4} (D)
           (B) edge      [below]        node {2~~~~} (C)
           (C) edge         [above left]     node {3} (D);

           \path[<->] (A) edge [bend left, dashed, blue]          node {} (B)
           (B) edge [bend left, dashed, blue]          	node {} (C)
           (C) edge [bend left, dashed, blue]          	node {} (D);
\end{tikzpicture}
\caption{Scheme of the considered microgrid composed of 4 nodes. The black solid arrows indicate the positive direction of the currents through the power lines. For the case of output consensus, the dashed blue lines represent the communication network.}
\label{fig:microgrid_example}
\end{center}
\end{figure}

Before applying the controller~\eqref{eq:con_stab_IMM}, we first show that the DC microgrid~\eqref{bsteq:sys_IMM} is Krasovskii passive with respect to the  storage function
\begin{align*}
S_K (\Delta_\delta x_k) : = \frac{1}{2} (|\Delta_\delta I_{s,k}|_{L_s}^2 + |\Delta_\delta V_k|_C^2 + |\Delta_\delta I_k|_L^2),
\end{align*}
which satisfies
\begin{align*}
\Delta_\delta S_K (\Delta_\delta x_k)
&= (\Delta_\delta \sigma_\delta I_{s,k}) L_s \Delta_\delta^2 I_{s,k}\\
&\qquad + (\Delta_\delta \sigma_\delta V_k) C \Delta_\delta^2 V_k
+ (\Delta_\delta \sigma_\delta I_k) L \Delta_\delta^2 I_k \\
&=   - |\Delta_\delta \sigma_\delta V_k|_{G^*_l}^2 - |\Delta_\delta \sigma_\delta I_k|_R^2\\
&\qquad +(\Delta_\delta \sigma_\delta I_{s,k})^\top \Delta_\delta (u_k \circ \sigma_\delta V_k ) \\
&\qquad - (\Delta_\delta \sigma_\delta V_k)^\top \Delta_\delta (u_k \circ \sigma_\delta I_{s,k}) \\
&= - (|\Delta_\delta \sigma_\delta I_k|_R^2 + |\Delta_\delta \sigma_\delta V_k|_{G^*_l}^2)\\
&\qquad + (\Delta_\delta u_k)^\top z_k,
\end{align*}
where $z_k = \Delta_\delta \sigma_\delta I_{s,k} \circ \sigma_\delta V_k - \sigma_\delta I_{s,k} \circ \Delta_\delta \sigma_\delta V_k$.
This implies Krasovskii passivity at $((I_s^*, V^*, I^*), u^*)$ with respect to the Krasovskii passive output $z_k$. However, since the Krasovskii passive output $z_k$ depends on $I_{s,k+1}, I_{s,k+2}, V_{k+1}$ and $V_{k+2}$, then the controller~\eqref{eq:con_stab_IMM} is not implementable in practice. Thus, in order to overcome this issue, one can use its (twice) backwards time-shifted version and approximate $u_k$ with $u_{k-1}$, \emph{i.e.},
\begin{subequations}\label{eq:shift_boost}
\begin{align}
    u_k \approx u_{k-1} =&~ \left(2K_1 + \delta K_2 \right)^{-1} \left( \left(2K_1 - \delta K_2 \right)u_{k-2} \right.\nonumber\\
    &\left. -2\delta z_{k-2} + 2\delta K_2 u^\ast \right),
\end{align}
where 
\begin{align}\label{eq:z_k_2}
z_{k-2} =&~\frac{1}{4\delta}\left((I_{s,k}-I_{s,k-2}) \circ (V_{k-1}+V_{k-2})\right.\nonumber\\
&\left. - (I_{s,k-1}+I_{s,k-2}) \circ (V_{k}-V_{k-2})\right).
\end{align}
\end{subequations}

Now, we are ready to show the effectiveness of the time-shifted version of~\eqref{eq:con_stab_IMM}, \emph{i.e.}, the controller~\eqref{eq:shift_boost} by performing voltage regulation (see the control objective~\eqref{bsteq:volreg}) in a DC microgrid of 4 boost converters interconnected as in Fig.~\ref{fig:microgrid_example}. According to~\cite{CUCUZZELLA2018161},  for each node $i=1, \cdots, 4$, we select $L_{si}=$ \SI{1.12}{\milli\henry}, $C_i=$ \SI{6.8}{\milli\farad}, and $V_{si}^\ast=$ \SI{280}{\volt}. The parameters of the lines are chosen as in \cite[Table III]{CTD:18}, while the values for the loads are taken from \cite[Table 2]{KCF:22}. For each node $i=1, \cdots, 4$, the controller gains are selected as $K_{1i} =$ \num{1e6} and $K_{2i}=$ \num{4e7}. In the simulation, at the time instant $t=$ \SI{1}{\second}, we consider a step-increase of the load $I_{l}$ equal to 50\%. Figure~\ref{fig:KPBC} shows that the voltages asymptotically converge to the desired value $V_i^\ast =$ \SI{380}{\volt}, and  the generated currents are asymptotically stable. Finally, we show in Fig.~\ref{fig:KPBC_u} the comparison between the sampled controller~\eqref{eq:shift_boost} (solid line) and its continuous-time version (dotted line).

\begin{figure}
\begin{center}
\includegraphics[width=1.05\columnwidth]{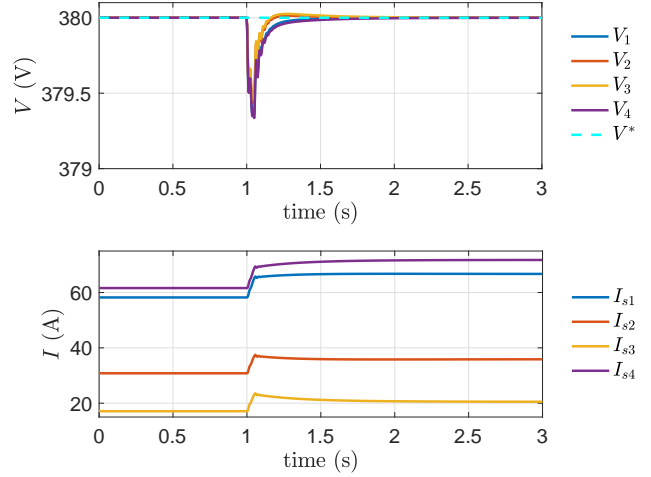}   
\caption{Stabilization of boost converters. {\bf (Top)}: Time evolution of the voltages together with the corresponding reference (dashed line). {\bf(Bottom)}: Time evolution of the generated currents.} 
\label{fig:KPBC}
\end{center}
\end{figure}

\begin{figure}
\begin{center}
\includegraphics[width=1.05\columnwidth]{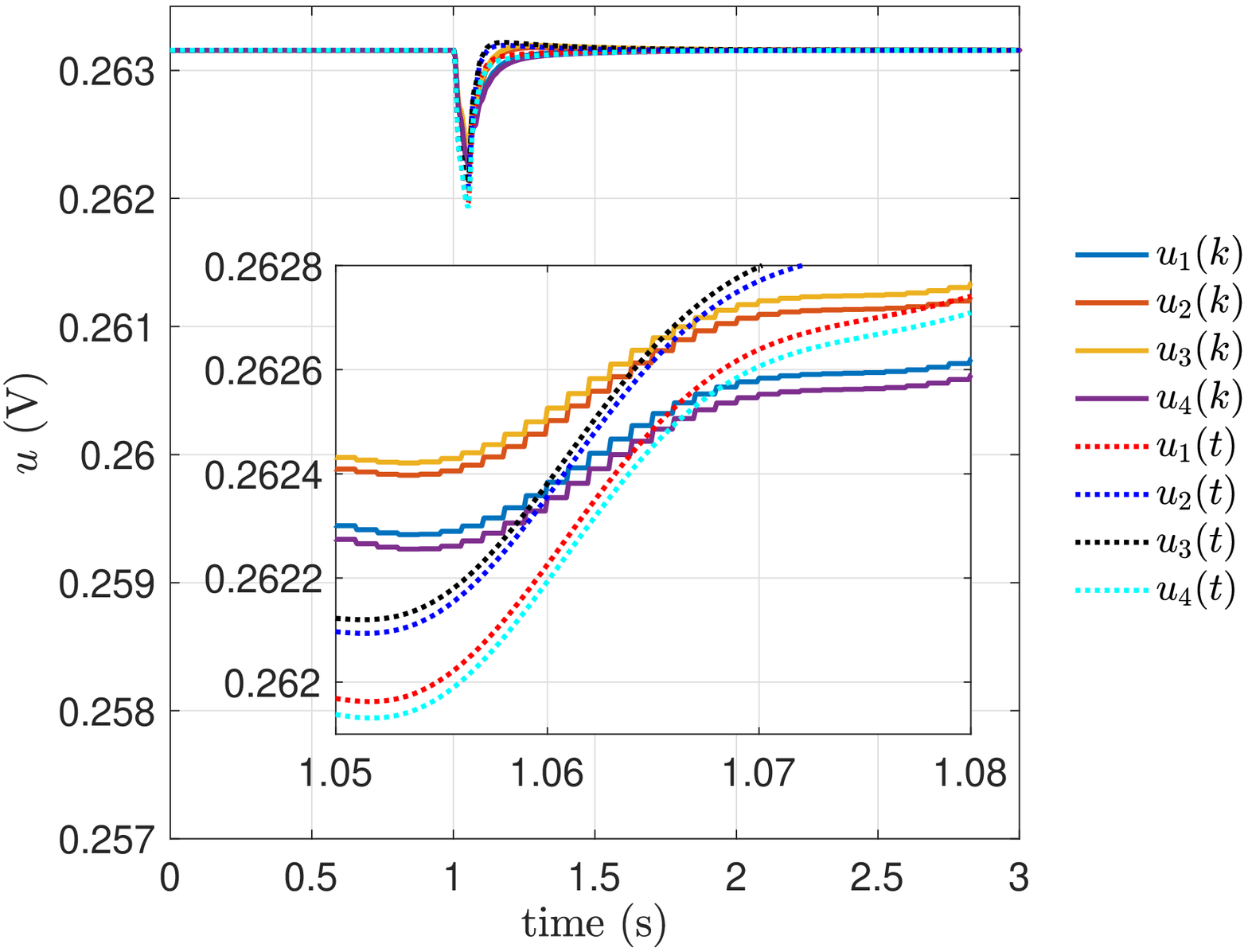}   
\caption{Continuous-time (dotted lines) and discrete-time (solid lines) control inputs.} 
\label{fig:KPBC_u}
\end{center}
\end{figure}

\begin{figure}[t]
    	\begin{center}
    		\begin{circuitikz}[scale=1,transform shape]
    			\ctikzset{current/distance=1}
    			\draw
    			node[] (Ti) at (0,0) {}
    			node[] (Tj) at ($(5.4,0)$) {}
    			node[] (Aibattery) at ([xshift=-4.5cm,yshift=0.9cm]Ti) {}
    			node[] (Bibattery) at ([xshift=-4.5cm,yshift=-0.9cm]Ti) {}
    			node[] (Ai) at ($(Aibattery)+(0,0.2)$) {}
    			node[] (Bi) at ($(Bibattery)+(0,-0.2)$) {}
    			($(Ai)+(-0.0005,0)$) to [R, l={$R_{i}$}] ($(Ai)+(1.7,0)$) {}
    			($(Ai)+(1.7,0)$) to [short,i_={$\dfrac{\varphi_i}{L_i}$}]($(Ai)+(1.701,0)$){}
    			($(Ai)+(1.701,0)$) to [L, l={$L_{i}$}] ($(Ai)+(3,0)$){}
    			to [short, l={}]($(Ti)+(0,1.1)$){}
    			($(Bi)+(-0.0005,0)$) to [short] ($(Ti)+(0,-1.1)$);
    			\draw
    			($(Ai)$) to []($(Aibattery)+(0,0)$)to [V_=$u_i$]($(Bi)$)
    			($(Ti)+(-1.3,1.1)$) node[anchor=south]{{$\dfrac{q_i}{C_i}$}}
    			($(Ti)+(-1.3,1.1)$) node[ocirc](PCCi){}
    			($(Ti)+(-.3,1.1)$) to [short,i>={$I_{Li}(q_i)$}]($(Ti)+(-.3,0.5)$)to [I]($(Ti)+(-.3,-1.1)$)
    			($(Ti)+(-1.3,1.1)$) to [C, l_={$C_{i}$}] ($(Ti)+(-1.3,-1.1)$)
    			($(Ti)+(2.,1.1)$) to [short,i_={$\dfrac{\varphi_{tk}}{L_{tk}}$}] ($(Ti)+(2.2,1.1)$)
    			($(Ti)+(0,1.1)$)--($(Ti)+(.6,1.1)$) to [R, l={$R_{tk}$}] 
    			($(Ti)+(2.5,1.1)$) {} to [L, l={{$L_{tk}$}}, color=black]($(Tj)+(-2.2,1.1)$){}
    			($(Tj)+(-2.2,1.1)$) to [short]  ($(Ti)+(3.4,1.1)$)
    			($(Ti)+(0,-1.1)$) to [short] ($(Ti)+(3.4,-1.1)$);
    			\draw
    			node [rectangle,draw,minimum width=6.2cm,minimum height=3.4cm,dashed,color=gray,label=\textbf{Node $i$},densely dashed, rounded corners] (DGUi) at ($0.5*(Aibattery)+0.5*(Bibattery)+(2.25,0.2)$) {}
    			node [rectangle,draw,minimum width=2.2cm,minimum height=3.4cm,dashed,color=gray,label=\textbf{Line $k$},densely dashed, rounded corners] (DGUi) at ($0.5*(Aibattery)+0.5*(Bibattery)+(6.65,0.2)$) {};
    		\end{circuitikz}
    		\caption{Electrical scheme of node $i$ and line $k$, where $I_{Li}(q_i):= G_{Li}^*\frac{q_i}{C_i} + I_{Li}^* + \frac{C_i}{q_i}P_{Li}^*$.}
    		\label{fig:networks}
    	\end{center}
    \end{figure}
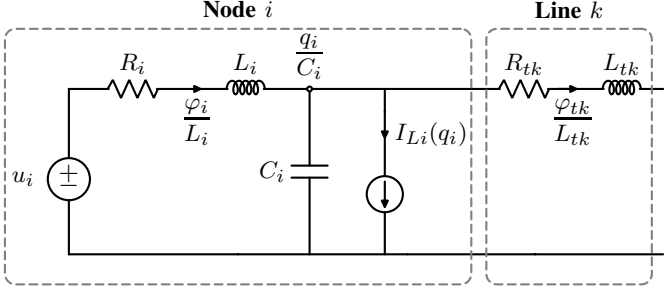

\subsection{Output consensus for Buck converters}


Consider a DC microgrid of buck converters in the presence of constant impedance,  current and power loads  with $\nu$ nodes and $\mu$ edges (see \emph{e.g.} \cite{FCS:23}):
\begin{align}\label{DCeq:sys}
&\left\{\begin{array}{r@{}l}
\dot x &{}= (\cJ - \cR) \nabla \cH (x) + \bar f(q) + g u  + d \\[0.5ex]
y &{}=g^\top \nabla \cH(x)
\end{array}\right.
\end{align}
where $u, y \in \bR^\nu$, and $d \in \bR^{2\nu + \mu}$ denote the input, the generated current, and an unknown constant disturbance, respectively, and the state vector consists of $\varphi, q \in \bR^\nu$ and $\varphi_t \in \bR^\mu$, \emph{i.e.},
$x := 
\begin{bmatrix}
\varphi^\top & q^\top & \varphi_t^\top
\end{bmatrix}^\top$. The Hamiltonian function is as follows $$\cH (x) := \frac{1}{2}\left( | \varphi |_{L^{-1}}^2 + | q |_{C^{-1}}^2 + | \varphi_t |_{L_t^{-1}}^2 \right),$$
and 
\begin{align}
&  \cJ :=
\begin{bmatrix}
0 & - I_\nu & 0 \\
I_\nu & 0 & D \\
0 & - D^\top & 0
\end{bmatrix}, \quad
\cR := 
\begin{bmatrix}
R & 0& 0 \\
0 & G_L^* & 0\\
0 & 0 & R_t
\end{bmatrix},
\nonumber\\
&  \bar f(q) := 
-
\begin{bmatrix}
0 \\ I_L^* + {\rm diag} \{ C^{-1} q\}^{-1} P_L^* \\ 0
\end{bmatrix}, \quad
g :=\begin{bmatrix}
I_\nu \\ 0 \\ 0
\end{bmatrix},
\nonumber
\end{align}
where $R, L, C \in \bR^{\nu \times \nu}$ and $R_t, L_t \in \bR^{\mu \times \mu}$ are diagonal and positive definite, and $I_L^*, P_L^* \in \bR^\nu$, and $G_L^* \in \bR^{\nu \times \nu}$; see Fig.~\ref{fig:networks} and Table~\ref{tab:symbols} for the meaning of each used symbol. The incidence matrix $D \in \bR^{\nu \times \mu}$ describes the network topology.

To improve the generation efficiency, it is generally desired in DC microgrids that the total current demand is shared among all the nodes  \cite{CTD:18}. This is called \emph{current sharing}, and in the considered application is equivalent to achieving output consensus, \emph{i.e.}, $\lim_{k \to \infty} (y_k - \alpha_k \1_m)= 0$.
We use the sampled discrete-time controller~\eqref{eq1:con_oc_IMM} to achieve current sharing.
As a discretization for the network of buck converters \eqref{DCeq:sys}, we consider the implicit midpoint method, {\emph{i.e.}},
\begin{align}\label{DCeq:sys_IMM}
\left\{\begin{array}{r@{}l}
\Delta_\delta x_k &{}= (\cJ - \cR) \nabla \cH( \sigma_\delta x_k) + \sigma_\delta \bar f (q_k) + g u_k  + d \\[0.5ex]
y_k &{}=g^\top \nabla \cH( \sigma_\delta x_k),
\end{array}\right.
\end{align}
with $\sigma_\delta$ in~\eqref{eq2:IMM}. 
To check the well definedness of the discretized model \eqref{DCeq:sys_IMM}, we consider the partial derivative of $\Delta_\delta x_k - (\cJ - \cR) \nabla \cH( \sigma_\delta x_k) - \sigma_\delta f (q_k) - g u_k  - d$ with respect to $x_{k+1}$, which is
\begin{align*}
\Pi (x_{k+1}) &:= \frac{1}{\delta} I_{2 \nu + \mu} - \frac{1}{2}(\cJ - \cR) {\rm diag} \{ L^{-1}, C^{-1}, L_t^{-1}\}\nonumber\\
&\qquad + \frac{1}{2} {\rm diag}\{0, {\rm diag}\{C P^*_L \} {\rm diag}\{q_{k+1} \}^{-2}, 0\}.
\end{align*}
Then, for any $x_{k+1}$ such that $\Pi (x_{k+1})$ is invertible, there exist open subsets $U_{x_{k+1},u_k}, V \subset \bR^\nu \times \bR^\nu \times \bR^\mu$ such that $U_{x_{k+1},u_k} \ni x_k \mapsto x_{k+1} \in V$ is well defined and analytic at each fixed $u_k \in \bR^\nu$. By the identity theorem, this mapping at each fixed $u_k \in \bR^\nu$ is well defined on the union of $U_{x_{k+1},u_k}$ such that $\Pi (x_{k+1})$ is invertible. By physics, $q_k$ is element-wise positive, and thus $\Pi(x_{k+1})$ is invertible for practically meaningful values of $x_k$ when the sampling period $\delta > 0$ is sufficiently small.

Next, we show that the sampled DC microgrid~\eqref{DCeq:sys_IMM} is strictly Krasovskii passive with respect to the  storage function
\begin{align*}
S_K (  \Delta_\delta x_k) : =  \frac{1}{2} |\Delta_\delta x_k|_{\nabla^2 \cH}^2,
\end{align*}
which satisfies
\begin{align*}
\Delta S_K (\Delta_\delta x_k) 
&= (\Delta_\delta \sigma_\delta x_k)^\top \nabla^2 \cH \Delta_\delta^2 x_k\\
&= - W_K(q_k, q_{k+2}, \Delta_\sigma \sigma_\delta x_k) + (\Delta_\delta u_k)^\top \Delta_\delta y_k,
\end{align*}
where
\begin{align*}
&W_K(q_k, q_{k+2}, \Delta_\sigma \sigma_\delta x_k) \\
&:=|\nabla^2 \cH \Delta_\delta \sigma_\delta x_k|_{\cR}^2 - (\Delta_\delta \sigma_\delta x_k)^\top \nabla^2 \cH \Delta_\delta \sigma_\delta \bar f(q_k)\\
&=|\nabla^2 \cH \Delta_\delta \sigma_\delta x_k|_{\cR}^2 - |\Delta_\delta \sigma_\delta q_k|_{{\rm diag}\{P_L^*\} {\rm diag}\{ q_k \circ q_{k+2}\}^{-1} }^2.
\end{align*}
Next,~\eqref{eq2:KP} holds if
\begin{align}\label{DCeq:W}
G_L^*  - C^2 {\rm diag}\{P_L^*\} {\rm diag}\{ q_k \circ q_{k+2}\}^{-1} \succ 0.
\end{align}
Although $q_{k+2}$ depends on $u_{k+1}$, by a slight modification of Theorem~\ref{thm:oc}, one can show that a sampled discrete-time controller~\eqref{eq1:con_oc_IMM} achieves output consensus if $\Pi(x_{k+1})$ is invertible, and~\eqref{DCeq:W} holds along the closed-loop trajectory staying in a compact set. We confirm this by numerical simulations.

We consider a DC microgrid of 4 buck converters interconnected as in Fig.~\ref{fig:microgrid_example}, where the dashed blue lines represent the communication network. The values of the parameters of each node and line are chosen as in \cite[Tables II, III]{CTD:18}, while those of the loads are taken from~\cite[Table 2]{KCF:22}. For each node, $i=1, \cdots, 4$, the controller parameters are selected as $M= 10I_{4}$, $K= 0.5 I_{4}$, and the matrix $E$ corresponds to the incidence matrix associated with the communication network in Fig.~\ref{fig:microgrid_example}. Moreover, we add the term $C^{-1}q^* + RL^{-1} \varphi_k$ to the designed control input $u_k$. This simply allows us to shift the system equilibrium such that the voltage average ($V_{\mathrm{av}}$) is equal to the voltage reference $V_i^\ast =$ \SI{380}{\volt}, $i=1, \cdots, 4$; see, \emph{e.g.}, \cite{TCC:19}. Note that, also in this case, in order to make the controller implementation possible in practice, we use its backward time-shifted version.

For the sake of notational  simplicity, let $V_i:={q_i}/{C_i}$ and $I_i:={\varphi_i}/{L_i}$ denote respectively the voltage and the generated current associated with node $i=1,\dots,4$. 
In the simulation, at the
time instant $t=$ \SI{1}{\second}, we consider a step-increase of the load $P_L^\ast$ as in Table~\cite[Table 2]{KCF:22}. Figure~\ref{fig:out_con} shows that both voltages and currents converge to a constant equilibrium. Specifically, the voltage average converges to the voltage reference, and output consensus (\emph{i.e.}, current sharing) is achieved. Finally, we show in Fig.~\ref{fig:out_con_u} the comparison between the sampled controller~\eqref{eq1:con_oc_IMM} (solid line) and its continuous-time version (dotted line).

\section{Conclusion}\label{sec:con}
In this paper, we have introduced the concept of Krasovskii passivity for sampled discrete-time nonlinear systems, inspired by the counterpart concept to continuous-time systems. The proposed concept has been investigated by relating it with widely known passivity concepts: incremental passivity and shifted passivity. In particular, we have established the following implications: incremental passivity $\implies$ Krasovskii passivity $\implies$ shifted passivity with respect to a suitable output function. Then, we have developed sampled-data control frameworks for stabilization and output consensus based on Krasovskii passivity. Their effectiveness has been illustrated by a network of boost converters and an islanded DC microgrid, respectively. As exemplified by these power systems and linear PHSs, suitable temporal discretization preserves Krasovskii passivity, \emph{i.e.}, suitable temporal discretizations of Krasovskii passive continuous-time systems are Krasovskii passive sampled discrete-time systems. Future work includes investigating if this is true for all Krasovskii passive continuous-time systems.


\begin{figure}[t!]
\begin{center}
\includegraphics[width=1.05\columnwidth]{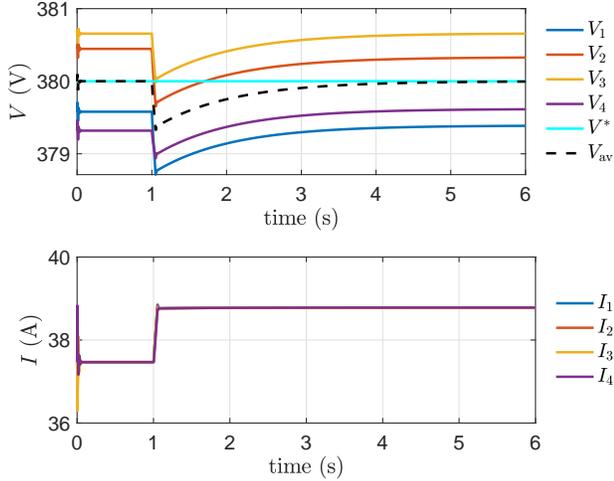}   
\caption{Output consensus for buck converters. {\bf (Top)}: Time evolution of the voltages, their average value (dashed line) and the corresponding reference (cyan line). {\bf (Bottom)}: Time evolution of the generated currents.} 
\label{fig:out_con}
\end{center}
\end{figure}

\appendices
\section{Proofs of Theorems in Section~\ref{sec:KP}}

\subsection{Proof of Theorem~\ref{thm:IP}}\label{app:IP}
Using a storage function for incremental passivity, define
\begin{align*}
\hat S_K(x_k, u_k) :&= \frac{1}{\delta^2} S_I( x_k, x_{k+1}) \\
&= \frac{1}{\delta^2} S_I( x_k, F_\delta (x_k, u_k)).
\end{align*}
Then, it holds that
\begin{align*}
\hat S_K(x^*, u^*)
&= \frac{1}{\delta^2} S_I( x^*, F_\delta (x^*, u^*))\\
&= \frac{1}{\delta^2} S_I( x^*, x^*) = 0.
\end{align*}
Next, it follows from~\eqref{eq:sys_ex_exp} and~\eqref{eq:IP} with $x_{k+2} = F_\delta (x_{k+1}, u_{k+1})$ that 
\begin{align*}
\Delta_\delta \hat S_K( x_k, u_k) 
&=  \frac{1}{\delta^2} \Delta_\delta S_I( x_k, x_{k+1})  \\
&\le \frac{(u_k - u_{k+1})^\top}{\delta} \frac{y_k - y_{k+1}}{\delta}\\
&= (\Delta_\delta u_k)^\top \Delta_\delta y_k = v_k^\top z_k.
\end{align*}
Thus, the system is Krasovskii passive.
\hfill\QED


\subsection{Proof of Theorem~\ref{thm:SP}}\label{app:SP}
With the explicit form~\eqref{eq:sys_ex_exp}, the dissipation inequality~\eqref{eq1:KP} for Krasovskii passivity can be rewrittens as
\begin{align*}
\frac{\hat S_K( F_\delta (x_k, u_k), u_{k+1}) - \hat S_K(x_k, u_k)}{\delta} \le \frac{(u_{k+1} -u_k)^\top}{\delta} z_k,
\end{align*}
where recall $v_k =(u_{k+1} -u_k)/\delta$ in~\eqref{eq:sys_ex_exp}.
Substituting $u_{k+1} = u_k = u^*$ into this yields
\begin{align}\label{pf1:SP}
\hat S_K( F_\delta (x_k, u^*), u^*) - \hat S_K(x_k, u^*) \le 0.
\end{align}
Utilizing this inequality, we show that a storage function for shifted passivity is $S_S(x_k) := \delta^2 \hat S_K(x_k, u^*)$.
First, it holds from~\eqref{eq:KP_storage_exp} that
\begin{align*}
S_S (x^*) &= \delta^2 \hat S_K(x^*, u^*)\\ 
&= \delta^2 S_K( x^*, u^*, 0)=0.
\end{align*}
Next, it follows from~\eqref{pf1:SP} and the fundamental theorem of calculus that 
\begin{align*}
\Delta_\delta S_S ( x_k) 
&=  \delta^2 \Delta_\delta \hat S_K( x_k, u^*)  \\
&= \delta (\hat S_K( F_\delta (x_k, u_k), u^*) - \hat S_K(x_k, u^*) ) \\
&\le \delta (\hat S_K( F_\delta (x_k, u_k), u^*) - \hat S_K( F_\delta (x_k, u^*), u^*))  \\
&= \delta \int_0^1 \frac{\partial \hat S_K( F_\delta (x_k, s u_k + (1-s) u^*), u^*)}{\partial s} ds.
\end{align*}
By the chain rule and the definition~\eqref{eq:SP_out} of $y_k$, we have
\begin{align*}
&\delta \!\int_0^1 \frac{\partial \hat S_K( F_\delta (x_k, s u_k + (1-s) u^*), u^*)}{\partial s} ds=(u_k - u^*)^\top y_k.
\end{align*}
Note that $y_k$ in~\eqref{eq:SP_out} is zero when $u_k = u^*$, \emph{i.e.}, $y^* =0$. Thus, the system is shifted passive. \hfill\hfill\QED

\begin{figure}[t!]
\begin{center}
\includegraphics[width=1.05\columnwidth]{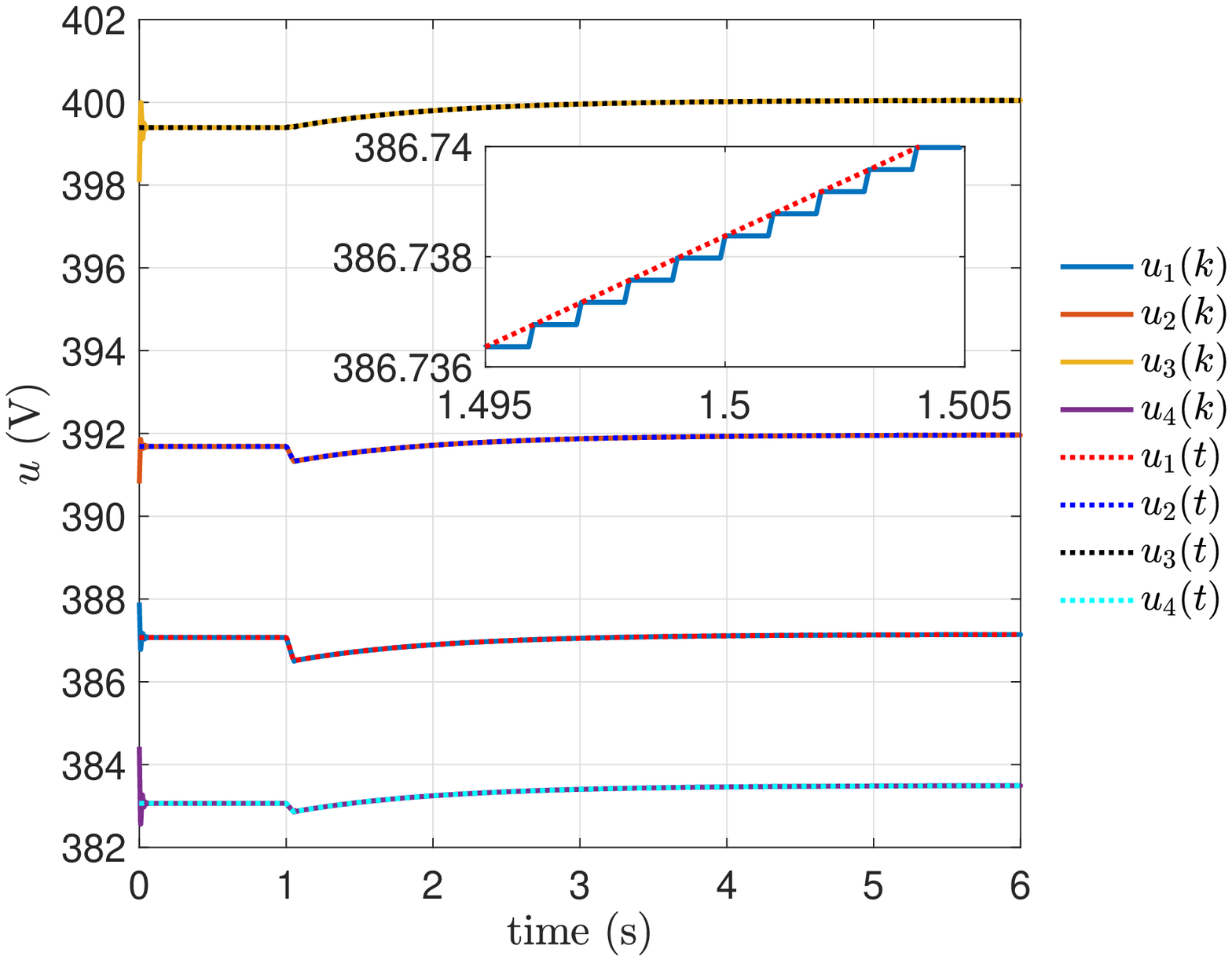}   
\caption{Continuous-time (dotted lines) and discrete-time (solid lines) control inputs.} 
\label{fig:out_con_u}
\end{center}
\end{figure}

\section{Proofs of Theorems in Section~\ref{sec:stab}}
\subsection{Proof of Theorem~\ref{thm:stab}}\label{app:stab}
Consider a Lyapunov candidate:
\begin{align*}
V(x_k,u_k) := S_K (x_k, u_k, \Delta_\delta x_k) + S_u (u_k).
\end{align*}
It follows from~\eqref{eq1:KP}  with $v_k = \Delta_\delta u_k$ and~\eqref{eq:con_KP} that
\begin{align*}
\Delta_\delta V(x_k, u_k) 
&= \Delta_\delta S_K(x_k, u_k, \Delta_\delta x_k) + \Delta_\delta S_u (u_k) \\
&\le  - W_K (\sigma_\delta x_k, u_k, \Delta_\delta \sigma_\delta x_k ) + \Delta_\delta u_k^\top z_k \\
&\qquad - c |\Delta_\delta u_k|^2 - \Delta_\delta u_k^\top z_k \\
&= - W_K (\sigma_\delta x_k, u_k, \Delta_\delta \sigma_\delta x_k ) - c |\Delta_\delta u_k|.
\end{align*}
From~\eqref{eq:sys_ex} and~\eqref{eq:con_stab}, $\hat z_k =\Delta_\delta u_k$.
Then, items (a) and (b) follow from a discrete-time version of the invariance principle, \emph{e.g.}, \cite[Theorem 13.3]{HC:11}.
\hfill\QED


\subsection{Proof of Proposition~\ref{prop:con_stab_IMM}}\label{app:con_stab_IMM}
Define
\begin{align}\label{pf1:con_stab_IMM}
S_u (u) := \frac{1}{2} |u -u^*|_{K_2}^2.
\end{align}
Then, it follows from~\eqref{eq2:IMM} and~\eqref{eq:con_stab_IMM} that
\begin{align}\label{pf2:con_stab_IMM}
\Delta_\delta S_u (u_k) 
&= \frac{|u_{k+1} -u^*|_{K_2}^2 - |u_k-u^*|_{K_2}^2}{2\delta} \nonumber\\
&= \frac{(u_{k+1} -u_k)^\top}{\delta} K_2 \frac{ u_k + u_{k+1} -2 u^*}{2} \nonumber\\
&= - (\Delta_\delta u_k)^\top K_2 (u^* - \sigma_\delta u_k )  \nonumber\\
&= - |\Delta_\delta u_k|_{K_1}^2 - (\Delta_\delta u_k)^\top z_k.
\end{align}
From $K_1 \succ 0$, we have~\eqref{eq:con_KP} for some $c > 0$.
\hfill\QED


\subsection{Proof of Theorem~\ref{lPHthm:stab}}\label{lPHapp:stab}
From~\eqref{eq:const} and~\eqref{lPHeq:sys}, $(x^*, u^*) \in \cE$ satisfies
\begin{align}\label{lPHpf1:stab}
(J - R) H x^* + B u^* =0
\end{align}
Note that $R \succ 0$ implies the invertibility of $J - R$. Also from $H \succ 0$, $(J - R) H$ is invertible. 
Thus, $x^*$ is uniquely determined if one specifies $u^*$. 
Moreover, $B$ is of full column rank.
Therefore, $u^*$ has a one-to-one correspondence with $(x^*, u^*) \in \cE$.

Next, the closed-loop system can be described as a system with the state $(x_k, x_{k+1}, u_k)$. This is well defined if $I_n/\delta  - (J - R) H/2$ and $A_s$ are invertible.

Now, we are ready to show exponential stability.
From~\eqref{IPHeq:storage} and~\eqref{pf1:con_stab_IMM}, we consider the following Lyapunov candidate:
\begin{align*}
V(x_k, u_k) = \frac{1}{2} (|\Delta_\delta x_k|_H^2 + |u_k -u^*|_{K_2}^2) \ge 0.
\end{align*}
From $H \succ 0$ and $K_2 \succ 0$, this becomes zero if and only if $\Delta_\delta x_k = 0$ and $u = u^*$.
Furthermore, under $H, R \succ 0$ and the invertibility of $I_n/\delta - (J - R) H/2$, one can conclude $x_k = x^*$ from~\eqref{lPHeq:sys}. Thus, $V(x,u)$ is positive definite at $(x^*, u^*)$.

It follows from~\eqref{lPHeq:KP} and~\eqref{pf2:con_stab_IMM} that
\begin{align*}
\Delta_\delta V(x_k, u_k) = - (|H \Delta_\delta \sigma_\delta x_k|_R^2 + |\Delta_\delta u_k|_{K_1}^2).
\end{align*}
From~$\Delta_\delta \sigma_\delta x_k = (x_{k+2}-  x_k)/2\delta$ with $H, R \succ 0$ and from $K_1 \succ 0$, we have
\begin{align}
\lim_{k \to \infty} \Delta_\delta \sigma_\delta x_k &= \lim_{k \to \infty} \frac{x_{k+2}-  x_k}{2\delta}=0
\label{lPHpf2:stab}\\
\lim_{k \to \infty} \Delta_\delta u_k &= \lim_{k \to \infty} \frac{u_{k+1}-  u_k}{\delta} =0.
\label{lPHpf3:stab}
\end{align}
Thus, the controller dynamics~\eqref{eq:con_stab_IMM} with $z_k = B^\top H \Delta_\delta \sigma_\delta x_k$ in~\eqref{lPHeq:sys_ex} lead to
\begin{align*}
 0 &= \lim_{k \to \infty} K_1 \Delta_\delta u_k \\
 &= \lim_{k \to \infty} ( K_2 (u^* - \sigma_\delta u_k) - B^\top H \Delta_\delta \sigma_\delta x_k )\\
 &= \lim_{k \to \infty} K_2 (u^* - \sigma_\delta u_k).
\end{align*}
From $K_2 \succ 0$, we have $\lim_{k \to \infty} \sigma_\delta u_k = u^*$.
Next, from~\eqref{lPHpf3:stab} and $\delta > 0$, there exists $\tilde u \in \bR^m$ such that $\lim_{k \to \infty} u_k = \tilde u$ and consequently $\lim_{k \to \infty} \sigma_\delta u_k = \tilde u$ from~\eqref{eq2:IMM}.
Therefore, we have $\tilde u = u^*$, \emph{i.e.}, $\lim_{k \to \infty} u_k = u^*$.

It remains to show the convergence of $x_k$ to $x^*$.
From $\delta > 0$,~\eqref{lPHpf2:stab} implies
\begin{align}\label{lPHpf4:stab}
\lim_{k \to \infty} x_{2k} = \tilde x', \quad
\lim_{k \to \infty} x_{2k+1} = \tilde x''
\end{align}
for some $\tilde x', \tilde x'' \in \bR^n$.
Accordingly,~\eqref{eq2:IMM} yields
\begin{align}\label{lPHpf5:stab}
\lim_{k \to \infty} \sigma_\delta x_k = \lim_{k \to \infty} \frac{x_k +x_{k+1}}{2} =\frac{\tilde x'+\tilde x''}{2}.
\end{align}
Thus, it follows from the system dynamics~\eqref{lPHeq:sys} that
\begin{align}\label{lPHpf6:stab}
\frac{\tilde x'' - \tilde x'}{\delta} &= \lim_{k \to \infty} \Delta_{\delta} x_{2k} \nonumber\\
&=  (J - R) H \frac{\tilde x' + \tilde x''}{2} + B u^* \nonumber\\
&=\lim_{k \to \infty} \Delta_{\delta} x_{2k+1} 
 = \frac{\tilde x' - \tilde x''}{\delta}.
\end{align}
This implies $\tilde x' = \tilde x''$.
Since a solution to~\eqref{lPHpf1:stab} is unique for given $u^*$, we have $\tilde x' = \tilde x'' = x^*$. 
Therefore, exponential stability of $(x^*, u^*)$ follows from a discrete-time version of the invariance principle.
\hfill\QED



\section{Proofs of Theorems in Section~\ref{sec:oc}}
\subsection{Proof of Theorem~\ref{thm:oc}}\label{app:oc}
We consider the following scalar valued function:
\begin{align*}
V (x_k, u_k, \Delta_\delta x_k)
:= S_K (x_k, u_k, \Delta_\delta x_k) + S_y(y_k).
\end{align*}
Then, it follows from~\eqref{eq1:KP} and~\eqref{eq1:con_oc_KP} that
\begin{align*}
\Delta_\delta V(x_k, u_k, \Delta_\delta  x_k) 
\le - W_K (\sigma_\delta x_k, u_k, \Delta_\delta \sigma_\delta x_k ).
\end{align*}
Since the closed-loop system is positively invariant on a compact set $\Omega$, a discrete-time version of the invariance principle concludes
\begin{align*}
\lim_{k \to \infty} W_K ( \sigma_\delta x_k, u_k,  \Delta_\delta \sigma_\delta x_k ) = 0
\end{align*}
on $\Omega$.
From~\eqref{eq2:KP}, this implies
\begin{align}\label{pf1:oc}
\lim_{k \to \infty} \Delta_\delta \sigma_\delta x_k = 0.
\end{align}
Combining~\eqref{eq:sys_d_fc} and~\eqref{eq2:con_oc_KP} leads to
\begin{align}\label{pf2:oc}
\lim_{k \to \infty} E^\top \sigma_\delta y_k = 0
\end{align}
on $\Omega$. Next,~\eqref{pf1:oc} implies
\begin{align*}
\lim_{k \to \infty} \Delta_\delta \sigma_\delta x_k = \lim_{k \to \infty} \frac{\sigma_\delta x_{k+1} - \sigma_\delta x_k}{\delta} = 0,
\end{align*}
and thus there exists $\tilde x \in \bR^n$ such that
\begin{align*}
\lim_{k \to \infty} \sigma_\delta x_k = \tilde x.
\end{align*}
Moreover, from $y_k = h_\delta (\sigma_\delta x_k, d)$ in~\eqref{eq:sys_d} and the continuity of $h_\delta$, we obtain
\begin{align}\label{pf3:oc}
\lim_{k \to \infty} y_k =  \tilde y : = h_\delta ( \tilde x, d).
\end{align}
Consequently, from~\eqref{eq:const} and the continuity of $\sigma_\delta$, we have
\begin{align}\label{pf4:oc}
\lim_{k \to \infty} \sigma_\delta y_k =  \tilde y.
\end{align}
Therefore,~\eqref{pf2:oc} --~\eqref{pf4:oc} lead to
\begin{align*}
\lim_{k \to \infty} E^\top y_k &=  E^\top \tilde y \\
&= \lim_{k \to \infty} E^\top \sigma_\delta y_k = 0.
\end{align*}
From the property of $E$, output consensus~\eqref{eq:oc} is achieved on $\Omega$.
\hfill\QED


\subsection{Proof of Theorem~\ref{lPHthm:oc}}\label{lPHapp:oc}
Using $S_K (\Delta x_k)$ in~\eqref{IPHeq:storage}, define
\begin{align*}
V(x_k, u_k, \rho_k) &:= S_K (\Delta x_k) + S_y (y_k, \rho_k)\\
 S_y (y_k, \rho_k) &:= ( |E^\top M y_k|^2 + |y_k - \rho_k|_K^2)/2.
\end{align*}
We first show that this is a positive definite function at some $(x^*, u^*, \rho^*)$.
From $S_y (y_k, \rho_k) = 0$ and $K \succ 0$, there exists $\alpha \in \bR$ such that $M \rho_k = M y_k = \alpha \1_m$.
Then,~\eqref{eq:const} and~\eqref{eq2:con_oc_IMM} imply $\Delta_\delta u_k = 0$, \emph{i.e.}, there exist $\tilde u \in \bR^m$ such that $u_k = \tilde u$.
Also, from $S_K (\Delta x_k)=0$, there exists $\tilde x \in \bR^n$ such that $x_k = \tilde x$.
Then, it follows from~\eqref{eq2:IMM} and~\eqref{lPHeq:sys_d} that
\begin{align}\label{lPHpf0:oc}
\left\{\begin{array}{r@{}l}
0 &{}= (J - R) H \tilde x + B \tilde u + d\\
M y_k &{} = \alpha \1_m.
\end{array}\right.
\end{align}
According to \cite[Proposition 1]{FKC:22} for continuous-time linear PHSs, if $R, H \succ 0$, $B$ is of full column rank, and $M$ is non-singular then given $d \in \bR^r$,~\eqref{lPHpf0:oc} has a unique solution $(x^*, u^*, \alpha)$. This implies that $V(x_k, u_k, \rho_k)$ is positive definite at $(x^*, u^*, \rho^*)$ with $\rho^*= \alpha M^{-1} \1_m$, and 
$(x^*, u^*, \rho^*)$ is a unique equilibrium of the closed-loop system.

Next, the closed-loop system can be described as a system with the state $(x_k, x_{k+1}, u_k, \rho_k)$. This is well defined if $I_n/\delta  - (J - R) H/2$ and $A_c$ are invertible.

Now, we are ready to show weighted output consensus. 
Similarly to Theorem~\ref{lPHthm:KP}, one can show that the system~\eqref{lPHeq:sys_d} is strictly Krasovskii passive with respect to a storage function $S_K (\Delta x_k)$ in~\eqref{IPHeq:storage},
where
\begin{align*}
W_K( \sigma_\delta x_k, u_k, \Delta_\delta \sigma_\delta x_k) = |H \Delta_\delta \sigma_\delta x_k|_R^2.
\end{align*}
Also, it follows from~\eqref{eq2:con_oc_IMM} that
\begin{align*}
&\Delta_\delta S_y(y_k, \rho_k) \\
&= (\Delta_\delta y_k)^\top M^\top  E E^\top M  \sigma_\delta y_k \\
&\qquad + (\sigma_\delta y_k - \sigma_\delta \rho_k )^\top K (\Delta_\delta y_k - \Delta_\delta \rho_k)\\
&= - (\Delta_\delta y_k)^\top \Delta_\delta  u_k
- (\Delta_\delta y_k)^\top K (\Delta_\delta y_k - \Delta_\delta \rho_k) \\
&\qquad + (\Delta_\delta \rho_k)^\top K (\Delta_\delta y_k - \Delta_\delta \rho_k)\\
&= - (\Delta_\delta y_k)^\top \Delta_\delta  u_k - |\Delta_\delta y_k -\Delta_\delta \rho_k|_K^2.
\end{align*}
Thus, we have
\begin{align*}
\Delta V (x_k, u_k, \rho_k) &= - |H \Delta_\delta \sigma_\delta x_k|_R^2 - |\Delta_\delta y_k -\Delta_\delta \rho_k|_K^2. 
\end{align*} 
Since $V (x_k, u_k, \rho_k)$ is positive definite and $H, R \succ 0$, this implies
\begin{subequations}
\begin{align}
&\lim_{k \to \infty} \Delta_\delta \sigma_\delta x_k = 0\label{lPHpf1:oc}\\
&\lim_{k \to \infty} K (\Delta_\delta y_k -\Delta_\delta \rho_k) =0
\label{lPHpf3:oc}
\end{align} 
\end{subequations}
for each $(x_0, u_0, \rho_0) \in \bR^n \times \bR^m \times \bR^m$.

From~\eqref{lPHpf1:oc}, we have~\eqref{lPHpf4:stab} and~\eqref{lPHpf5:stab} for some $\tilde x', \tilde x'' \in \bR^n$.
Since $B$ is of full column rank in~\eqref{lPHeq:sys_d}, $\{u_{2k}\}_{k \in \bZ_+}$ and $\{u_{2k+1}\}_{k \in \bZ_+}$ respectively converge to some $\tilde u' \in \bR^m$ and $\tilde u'' \in \bR^m$ that are
\begin{align*}
\tilde u' &:= B^+ \left( \frac{\tilde x'' - \tilde x'}{\delta} - (J - R) H \frac{\tilde x' + \tilde x''}{2} - d \right)\\
\tilde u'' &:= B^+ \left( \frac{\tilde x' - \tilde x''}{\delta} - (J - R) H \frac{\tilde x' + \tilde x''}{2} - d \right).
\end{align*}

On the other hand, it follows from~\eqref{lPHpf5:stab} and $y_k = B^\top H \sigma_\delta x_k$ that
\begin{align}\label{lPHpf4:oc}
\lim_{k \to \infty} y_k = \tilde y := B^\top H \frac{\tilde x' + \tilde x''}{2},
\end{align}
and consequently from~\eqref{eq2:con_oc_IMM} and~\eqref{lPHpf3:oc} that
\begin{align}\label{lPHpf5:oc}
\lim_{k\to \infty} \frac{u_{k+1} -u_k}{\delta} 
&= - \lim_{k\to \infty} M^\top E E^\top M \sigma_\delta y_k \nonumber\\
&= - M^\top E E^\top M \tilde y
\end{align}
This implies $\tilde u' - \tilde u'' = \tilde u'' - \tilde u'$, \emph{i.e.}, $\lim_{k \to \infty}\Delta_\delta u_k =0$. 
Therefore, from~\eqref{lPHpf4:oc} and~\eqref{lPHpf5:oc}, weighed output consensus~\eqref{eq:woc} is achieved.

It remains to show the uniqueness of $\alpha$. To this end, we show exponential stability of $(x^*, u^*, \rho^*)$. Since the solution $(x^*, u^*, \alpha)$ to~\eqref{lPHpf0:oc} is unique, exponential stability of $(x^*, u^*, \rho^*)$ implies uniquencess of the consensus value $\alpha$.

By a similar calculation as~\eqref{lPHpf6:stab}, one can show $\tilde x'=\tilde x''$, \emph{i.e.}, $\lim_{k \to \infty}\Delta_\delta x_k =0$, which further implies $\lim_{k \to \infty}\Delta_\delta y_k =0$. Again from~\eqref{lPHpf3:oc} with $K \succ 0$, we have $\lim_{k \to \infty}\Delta_\delta \rho_k =0$. Therefore, the unique equilibrium $(x^*, u^*, \rho^*)$ of the closed-loop system is exponentially stable by a discrete-time version of the invariance principle.  
\hfill\QED





\balance

\bibliographystyle{IEEEtran}        
\bibliography{ref}          

\end{document}